\begin{document}

\title{Artificial-Noise-Aided Physical Layer Phase Challenge-Response Authentication for Practical OFDM Transmission}


\author{Xiaofu~Wu, Zhen~Yang, Cong~Ling, and Xiang-Gen Xia
\thanks{This  work was supported in part by the National Natural Science Foundation of China under Grants 61372123, 61271335, by the Key University Science Research Project of Jiangsu Province under Grant 14KJA510003.}
\thanks{Xiaofu~Wu and Zhen~Yang are with the Key Lab of Ministry of Education in Broadband Wireless Communication and Sensor Network Technology, Nanjing University of Posts and Telecommunications, Nanjing 210003, China (e-mails:
        xfuwu@ieee.org, yangz@njupt.edu.cn)).}
\thanks{Cong Ling is with the Department of Electrical and Electronic Engineering, Imperial College London, London, UK (e-mail: cling@ieee.org).}
\thanks{Xiang-Gen~Xia is with the Department of Electrical and Computer Engineering, University of Delaware, Newark, DE 19716
 (e-mail: xxia@ee.udel.edu).}}


\maketitle

\begin{abstract}
 Recently, we have developed a PHYsical layer Phase Challenge-Response Authentication Scheme (PHY-PCRAS) for independent multicarrier transmission. In this paper, we make a further step by proposing a novel artificial-noise-aided PHY-PCRAS (ANA-PHY-PCRAS) for practical orthogonal frequency division multiplexing (OFDM) transmission, where the Tikhonov-distributed artificial noise is introduced to interfere with the phase-modulated key for resisting potential key-recovery attacks whenever a static channel between two legitimate users is unfortunately encountered. Then, we address various practical issues for ANA-PHY-PCRAS with OFDM transmission, including correlation among subchannels, imperfect carrier and timing recoveries. Among them, we show that the effect of sampling offset is very significant and a search procedure in the frequency domain should be incorporated for verification. With practical OFDM transmission, the number of uncorrelated subchannels is often not sufficient. Hence, we employ a time-separated approach for allocating enough subchannels and a modified ANA-PHY-PCRAS is proposed to alleviate the discontinuity of channel phase at far-separated time slots. Finally, the key equivocation is derived for the worst case scenario. We conclude that the enhanced security of ANA-PHY-PCRAS comes from the uncertainty of both the wireless channel and introduced artificial noise, compared to the traditional challenge-response authentication scheme implemented at the upper layer.
\end{abstract}

\begin{keywords}
Authentication, physical layer authentication, OFDM transmission, information-theoretic security.
\end{keywords}

\IEEEpeerreviewmaketitle

\section{Introduction}

\PARstart{E}{nsuring} security of wireless communications has becoming increasingly important. Openness of wireless networks makes them vulnerable to spoofing attacks where an unauthorized user masquerades as another legitimate user. In the past, conventional cryptographic security mechanisms were used to foil such attacks\cite{ProcIEEEAttack}, in which the identity of a user should be authenticated through a challenge-response process, namely, authentication and key agreement (AKA) protocol. The AKA protocol was revised \cite{GSMLee} for stronger security from second-generation (2G) to fourth-generation (4G) systems.  A recent AKA protocol, known as Evolved Packet System AKA (EPS-AKA) \cite{ComMag4G,LTESecurity,Park4G}, has been proposed for the Long Term Evolution (LTE) system. The security of state-of-the-art EPS-AKA protocol comes from computational complexity, namely, the adversary has limited computational power. It is believed that more efforts should be done to prevent potential innovative attacks since the wireless medium offers novel avenues for intrusion.

In recent years, various efforts \cite{DanPHY,YuIFS,YuCOMMAG,VermaAccess,FerIFS,XiaoPHY,TugnaitJSAC,LiangTW,Barcca,BenCOM} have been made in authenticating the transmitter and receiver at the physical layer. In general, these physical layer authentication schemes can be classified as key based or keyless, according to whether a secret key shared between the transmitter and receiver is exploited to authenticate each other or not. In the keyless authentication schemes \cite{XiaoPHY,TugnaitJSAC,LiangTW,Barcca,FerIFS,BenCOM}, some specific features of either the transmitting device or the specific channel between the legitimate users were exploited in order to authenticate the transmission. As an initial trusted transmission is often required for identifying the features, they might be difficult to implement in some practical scenarios. Instead, various key based authentication schemes \cite{DanPHY,YuIFS,YuCOMMAG,VermaAccess} are closer to the traditional challenge-response mechanism, but less prone to attacks due to the protection from the unique randomness of physical characteristics.

For key based challenge-response authentication schemes, two legitimate users, Alice and Bob, shared a secret key. Whenever Alice transmits a random number as the challenge, Bob sends back a response (often called a tag), which is the output of a cryptographic hash function with both the challenge and key as its inputs. By verifying the response with a locally generated tag, Bob's identity can be confirmed. Indeed, both schemes in \cite{DanPHY,YuIFS} follow this authentication mechanism, which are implemented at the physical layer. In \cite{YuIFS}, both Alice and Bob presume public challenges, which are used to generate tags with the shared key, and the tag is physically encapsulated as an embedded fingerprint, which is conveyed with the primary transmission by superposition. The embedded fingerprint is often allocated with low power, which is further corrupted by the channel noise. Hence, its recovery is in general difficult for the adversary, as she/he faces a fundamental information-theoretic challenge, not purely a computational one. The PHYsical layer Challenge-Response Authentication Mechanism (PHY-CRAM) proposed in \cite{DanPHY} implements the conventional challenge-response process at the physical layer, where the randomness of fading channel's amplitude is used to protect both challenge and response (tag). Recently, we proposed a PHYsical layer Phase Challenge-Response Authentication Scheme (PHY-PCRAS) for multicarrier transmission in \cite{WuCL2014}. It requires the channel reciprocity and the randomness of channel-phase response \cite{SecureTrans} for the protection of the shared key from possible eavesdropping.

By exploiting the randomness of physical channels, various physical layer authentication schemes may ensure unconditional security at least for some bits of the shared key (which cannot be broken even if the adversary has unlimited computational power). However, this enhanced security depends heavily on the underlying physical channel, which is often out of our control. In the worst case of static channels (for example, line-of-sight communications), this kind of unconditional security may not be guaranteed. In this paper, we consider to develop an improved version of PHY-PCRAS for practical OFDM transmission, which can guarantee enhanced security even in the worst case of static channels.

The main contributions of this paper are summarized as follows:
\begin{enumerate}
    \item
     We propose a novel artificial-noise-aided PHY-PCRAS (ANA-PHY-PCRAS) for practical OFDM transmission, where the Tikhonov-distributed artificial noise is introduced to interfere with the phase-modulated key for resisting possible attacks. A strictly-positive key equivocation can be ensured even for the worst case scenario.
   \item
     We make a fine improvement on PHY-PCRAS \cite{WuCL2014}, where the estimate of phase differences between subcarriers is simply replaced by the direct estimate of subcarrier phases. This makes the implementation of PHY-PCRAS simpler.
   \item
     A time-separated subchannel allocation scheme is provided to obtain a sufficient number of uncorrelated subchannels. Then, a modified ANA-PHY-PCRAS is proposed for use of time-separated subchannels, which shows its robustness in verification for alleviating the discontinuity of channel phase at far-separated time slots.
    \item
    Various practical issues are discussed with non-ideal OFDM transmission, including imperfect carrier and timing recoveries. In particular, we show that small sampling offsets often result in significant frequency offsets along the allocated subcarriers, which should be compensated for proper verification.
    \item
    We also provide an application model for generating the shared keys between two legitimate nodes in 4G mobile networks. Hence, the conventional challenge-response authentication scheme employed in 4G networks might be replaced by ANA-PHY-PCRAS with enhanced security.
\end{enumerate}

The rest of the paper is organized as follows. In Section II, we propose an ANA-PHY-PCRAS for perfect OFDM transmission, and a time-separated subchannel allocation scheme is presented, along with a modified ANA-PHY-PCRAS. Section-III is devoted to practical issues with non-ideal OFDM transmission. The security analysis of ANA-PHY-PCRAS is given in Sectiion-IV. Simulation results are presented in Section-V, and the conclusion is made in Section-VI.

\newtheorem{lem}{Lemma}
\section{ANA-PHY-PCRAS for Perfect OFDM Transmission}
In this paper, we employ a common Alice-Bob-Eve model, where two trusting parties, Alice and Bob, share some common secrets and they want to authenticate each other, while Eve, as an opponent, has no any knowledge about the shared secrets and wants to impersonate Alice or Bob.

From the viewpoint of modern cryptography, the development of cryptographic primitives should consider the worst case scenario. In the past, various physical layer authentication schemes were proposed and claimed enhanced security of information-theoretic nature, which, however, depends heavily on the randomness of the underlying physical channel. Whenever the physical channel happens to be static, there is simply no guarantee of enhanced security. Therefore, it is essential to consider the worst case of static channels between Alice and Bob for developing physical layer authentication schemes.

\subsection{Basic Idea of ANA-PHY-PCRAS}
We propose a novel ANA-PHY-PCRAS for OFDM transmission, which makes two nontrivial improvements on PHY-PCRAS\cite{WuCL2014}.

Firstly, channel uncertainty has been proved to be essential for ensuring  enhanced security in various physical layer cryptographic approaches. For ANA-PHY-PCRAS, we introduce the Tikhonov-distributed artificial noise to interfere with the phase-modulated key, which could be used to create artificial channel uncertainty. Therefore, the minimum amount of enhanced security of information-theoretic nature can be guaranteed even in the worst case scenario. This contrasts sharply to various reported physical layer authentication schemes, which rely solely on the randomness of the physical channel. Whenever the channel randomness appears, ANA-PHY-PCRAS can be protected by the uncertainty from both the physical channel and artificial noise.

Secondly, we make a fine improvement on PHY-PCRAS, where the estimate of phase differences between subcarriers is simply replaced by the direct estimate of subcarrier phase. It does work as we use a noncoherent metric for verification, which remains unchanged for any random but constant phase increment over all subcarriers.

\subsection{Signal Model for Perfect OFDM Transmission}
In this paper, we assume a multipath fading channel between Alice and Bob. It is often associate with a channel coherence time $T_c$, below which the channel is considered as temporally correlated.

Assuming an OFDM system with $N$ subcarriers, a bandwidth of $W$ Hz and symbol length of $T_f=T_u + T_g$ seconds, of which, $T_g$ seconds are due to the length of cyclic prefix (CP), and $T_u = N/W$. In the following, we use $T_s=T_u/N=1/W$ to denote the sampling period.

The transmitter uses the waveforms
\begin{eqnarray}
  u_k(t) = \left\{\begin{array}{c}  \frac{1}{\sqrt{T_u}} e^{j2\pi\frac{W}{N}k(t-T_g)}, \quad  \text{if} \quad t\in [0,T_f]\\
  0, \quad \quad \quad \quad \quad \quad \quad \quad \text{otherwise}\end{array}\right.
\end{eqnarray}
$k=0, 1, \cdots, N-1$ and the transmitted baseband for an OFDM symbol is
\begin{equation}
   s(t) = \sum_{k=0}^{N-1} x_k u_k(t),
\end{equation}
where $x_k=e^{j \varphi_k}, k=0,1,\cdots,N-1$ are complex numbers from a signal constellation. Since we focus on a phase challenge-response scheme,  $M$-ary PSK modulation is preferred, and hence $\varphi_k \in \Omega \triangleq \left\{0, \frac{2\pi}{M}, \cdots, \frac{2\pi (M-1)}{M} \right\}$.

The signal is transmitted over a frequency-selective fading channel
\begin{equation}
  \label{eq:htau}
  h(\tau,t) = \sum_i \alpha_i(t) \delta(t-\tau_i),
\end{equation}
where $\tau_i$ is the delay of the $i$-th path and $\alpha_i(t)$ is the corresponding complex amplitude.
Assuming the receiver filter is flat within the signal bandwidth, the received signal is
\begin{equation}
   r(t) = \sum_i \alpha_i(t) s(t-\tau_i) + w(t),
\end{equation}
where $w(t)$ is an additive white Gaussian noise process.

Sampling the signal at time instants $t_n=nT_s$ yields
\begin{equation}
   r(t_n) = \sum_i \alpha_i(t_n) s(t_n-\tau_i) + w(nT_s).
\end{equation}
For convenience, assume that the delays $\tau_i$'s are integer multiples of $T_s$. With the sampling period of $T_s=1/W$, the number of resulting samples for each OFDM symbol is $N_f=N + N_g$, where $N_g$ denotes the length of CP. After removing the guard interval and taking the fast Fourier transform (FFT) to the received signal, we get
\begin{equation}
  \label{eq:recSigModel}
  y_k =h_k  x_k + w_k, k=0,1,\cdots, N-1,
\end{equation}
where $y_k = \sum_n r_n e^{-j2\pi\frac{n}{N}k}$
with $r_n = r((n+N_g)T_s)$, and
\begin{eqnarray}
\label{eq:hk}
  h_k \triangleq h_k(t_n) = \sum_i \alpha_i(t_n) e^{-j 2\pi k \frac{\tau_i}{T_u}},
\end{eqnarray}
which keeps constant at least over one OFDM symbol.

Let $f_c$ denote the carrier frequency at the $0$th subcarrier. With perfect OFDM transmission, it can be viewed as parallel multicarrier transmission with a set of carriers $\mathfrak{F}=\{f_c, f_c+\frac{W}{N}, f_c+2\frac{W}{N}, \cdots, f_c+(N-1)\frac{W}{N}\}$.

\subsection{Subchannel Allocation for ANA-PHY-PCRAS}
As a challenge-response process for ANA-PHY-PCRAS, Alice sends a challenge signal to Bob, Bob sends back a response signal, which can be verified by Alice with the shared secret key. With OFDM transmission, $L < N$ subcarriers $\{f_0,f_1,\cdots,f_{L-1}\}\subset \mathfrak{F}$ are selected. We shall show later that the perfect security of ANA-PHY-PCRAS requires independent fading among $L$ carriers. Hence, these carriers should be well separated.

Let $\mathcal{F} = [0,N-1]$ be the set of indexes for $N$ subcarriers in $\mathfrak{F}$. To ensure independence among $L$ subchannels, one has to find a subset of indexes $\Xi=\{l_0,l_1, \cdots, l_{L-1}\} \subset \mathcal{F}$  (of size $L$) with minimum mutual correlation, namely,
\begin{eqnarray}
  \label{eq:mucorr}
  \Xi = \arg \min_{\Xi \subset \mathcal{F}, |\Xi|=L} \max_{l_i \neq l_j \in  \Xi}|\rho_{l_i,l_j}|,
\end{eqnarray}
where
\begin{eqnarray}
  \rho_{l_i,l_j} \triangleq E\left[h_{l_i} h_{l_j}^*\right] \bigg/ {\sqrt{E\left[|h_{l_i}|^2\right] E\left[|h_{l_j}|^2\right]}}
\end{eqnarray}
since $E[h_{l_i}]=0, i\in [0,L-1]$. In practice, the allocated subchannels are often equally spaced, and the value of $\Delta \ell=l_{i+1}-l_i$ determines the minimum mutual correlation.

\subsubsection{Channel model with exponentially decaying power-delay profile}
Consider a time-invariant version of the multipath fading channel model (\ref{eq:htau}), where $\alpha_i$'s are zero-mean complex Gaussian variables with a power delay profile $\theta(\dot{\tau}_i)$ and $\dot{\tau}_i \triangleq \frac{\tau_i}{T_s}$. The normalized delays $\dot{\tau}_i$'s are assumed to be uniformly and independently distributed over the length of CP ($\dot{\tau}_i \in [0,N_g]$), and an exponentially decaying power-delay profile takes the form of $\theta(\dot{\tau}_i)=e^{-\dot{\tau}_i/\dot{\tau}_{\text{rms}}}$. With this channel model, it was shown in \cite{ExpDecay} that the normalized correlation between subcarriers $l_1$ and $l_2$ is a function of frequency separation $\Delta f= (l_2 - l_1)/N$, which takes the form of
\begin{eqnarray}
  \rho_{l_1,l_2} = \frac{1-e^{-N_g \left(\dot{\tau}_{\text{rms}}^{-1}+2\pi j (l_2-l_1)/N\right)}}{\dot{\tau}_{\text{rms}}(1-e^{-N_g\dot{\tau}_{\text{rms}}^{-1}})(\dot{\tau}_{\text{rms}}^{-1}+j2\pi(l_2-l_1)/N)}.
\end{eqnarray}

\newtheorem{thm}{Scenario}
\begin{thm}
\label{s1}
Consider the scenario where the system operates with a bandwidth of $W=20$ MHz, which is divided into $N=2048$ tones with a total symbol period of 108.8 $\mu$s, of which 6.4 $\mu$s constitutes the CP. Hence, $N_g=128$ and $N_f=N+N_g=2176$.
\end{thm}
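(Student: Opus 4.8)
The plan is to read this ``Scenario'' statement as an internal consistency check on the OFDM parameters rather than as an analytic theorem: given the externally specified quantities $W = 20$ MHz, $N = 2048$, a total symbol period of $108.8$ $\mu$s, and a CP duration of $6.4$ $\mu$s, I would verify that the derived sample counts $N_g = 128$ and $N_f = 2176$ are forced by the definitions already introduced in the signal model. The only relations I need are $T_s = 1/W$, $T_u = N/W = N T_s$, and $T_f = T_u + T_g$, together with the sample-count identifications $N_g = T_g/T_s$ and $N_f = N + N_g$.

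First I would fix the sampling period from the bandwidth,
\begin{equation}
  T_s = \frac{1}{W} = \frac{1}{20 \times 10^{6}\ \text{Hz}} = 50\ \text{ns}.
\end{equation}
Next I would compute the useful symbol duration $T_u = N T_s = 2048 \times 50\ \text{ns} = 102.4$ $\mu$s. Subtracting this from the stated total symbol period yields the guard interval $T_g = T_f - T_u = 108.8 - 102.4 = 6.4$ $\mu$s, which coincides with the specified CP duration and thereby confirms that the quoted numbers are mutually consistent. The CP sample count then follows as $N_g = T_g/T_s = 6.4\ \mu\text{s}\,/\,50\ \text{ns} = 128$, and the total number of samples per OFDM symbol is $N_f = N + N_g = 2048 + 128 = 2176$, exactly as claimed.

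Since every step reduces to a single division or addition of the given physical quantities, there is no genuine analytic obstacle here; the statement is essentially definitional. The one point I would flag as the main source of potential error is unit consistency: the bandwidth is expressed in Hz while the symbol durations are quoted in microseconds, so $T_s$ and $T_g$ must be converted to a common unit (nanoseconds, say) before the ratio $T_g/T_s$ is taken. Once the units are aligned, both $N_g$ and $N_f$ are uniquely determined by the definitions, and the scenario parameters are verified.
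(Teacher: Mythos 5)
Your verification is correct and follows exactly the derivation the paper implicitly intends by its word ``Hence'': the paper gives no separate proof, since $N_g = T_g/T_s = 6.4\,\mu\text{s} \times 20\,\text{MHz} = 128$ and $N_f = N + N_g = 2176$ are immediate from the definitions $T_s = 1/W$ and $T_u = N/W$ established in the signal model. Your additional check that $T_f - T_u = 6.4\,\mu$s matches the stated CP duration confirms the parameters are mutually consistent, which is all there is to show.
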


Let $\sigma_\tau$ be the time delay spread. For the Scenario \ref{s1} with $\sigma_\tau=0.5$ $\mu$s, it gives that $\dot{\tau}_{\text{rms}}=10$, and the frequency-spaced correlation function is plotted in Fig. \ref{fig:fscf}.

\begin{figure}[htb] 
   \centering
   \includegraphics[width=0.65\textwidth]{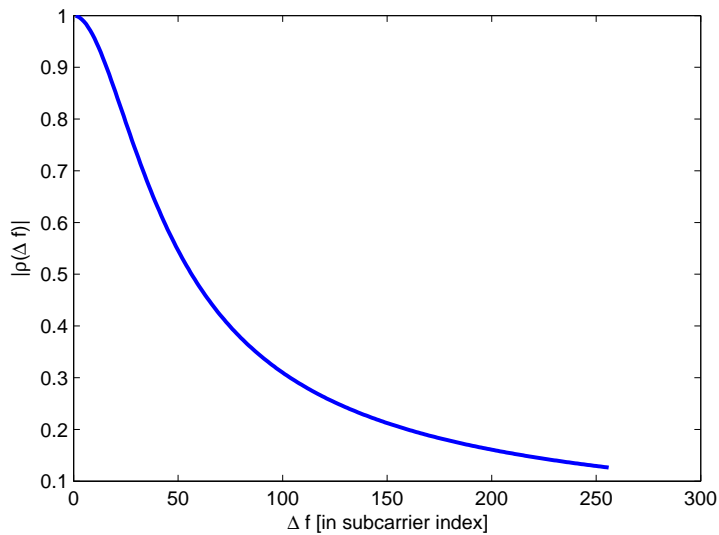}
   \caption{ Frequency-spaced correlation function.}
   \label{fig:fscf}
\end{figure}

\begin{figure}[htb] 
   \centering
   \includegraphics[width=0.5\textwidth]{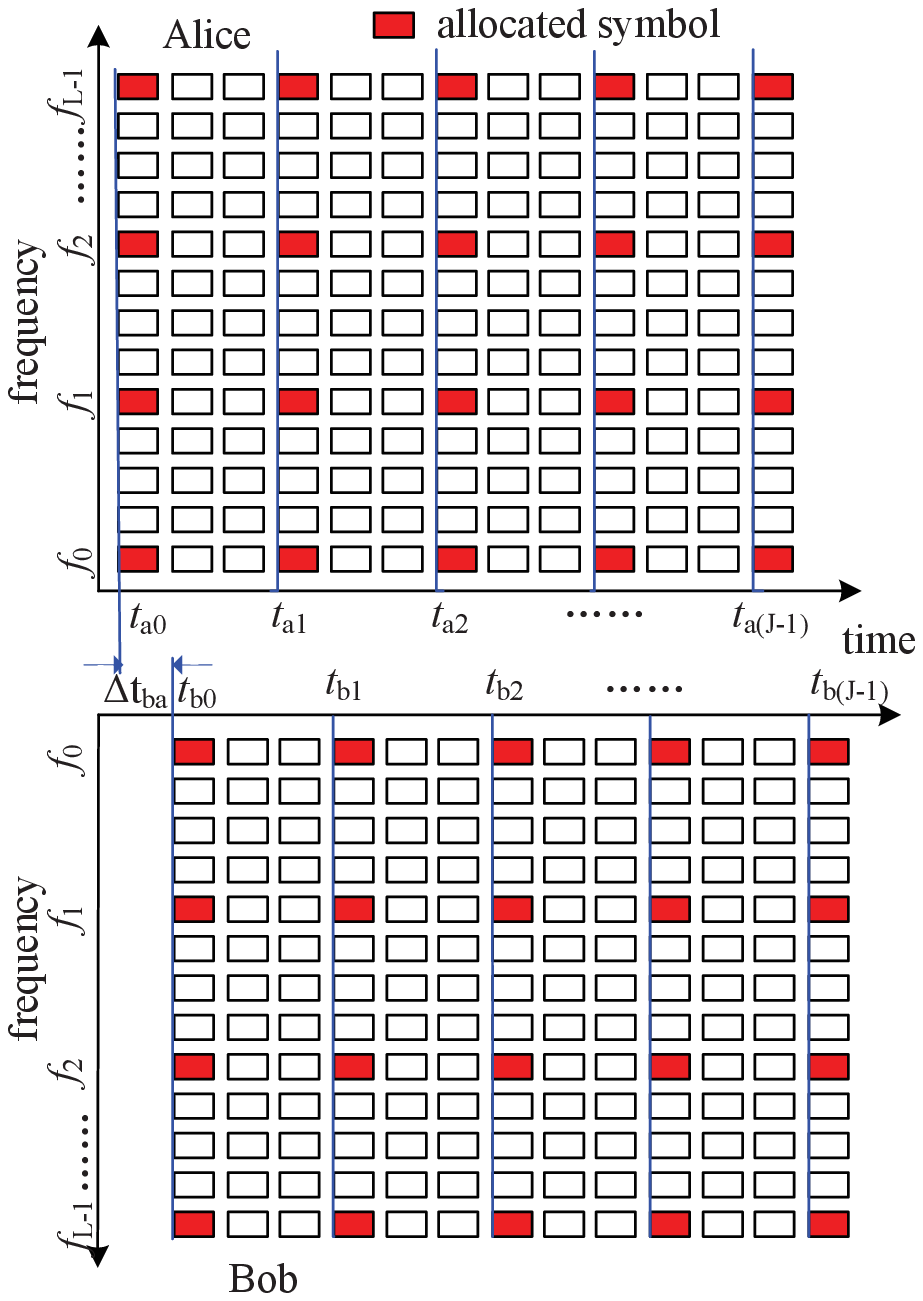}
   \caption{Time-separated allocation of OFDM symbols for PHY-PCRAS.}
   \label{fig:timeFreqAlloc}
\end{figure}
\subsubsection{Time-separated subchannel allocation}
It has been shown that two subchannels could be nearly uncorrelated if they are sufficiently separated, which, however, limits the number of available subchannels for the purpose of physical layer authentication. Consider again the Scenario \ref{s1}. Whenever the allocated subchannels are equally separated with $\Delta \ell =128$, there are only $L'=16+1=17$ well-separated subchannels and the minimum mutual correlation is about 0.2468.

In \cite{WuCL2014}, we have shown that the security of PHY-PCRAS depends on the number of independent subchannels. With BPSK modulation, the size of shared key is equal to the number of independent subchannels. Hence, it is important to allocate much more independent subchannels for use in PHY-PCRAS. Fortunately, one can allocate more subchannels over sufficiently-separated  time slots (OFDM symbols).

The time-separated subchannel allocation scheme is shown in Fig. \ref{fig:timeFreqAlloc}. With sufficiently-separated carriers, there are only $L'$ carriers $f_0, f_1, \cdots, f_{L'-1}$ for use. However, one can repeatedly employ such $L'$ carriers at times $t_0, t_1, \cdots, t_{J-1}$, where $t_j=t_0 + j \cdot \delta T$. To ensure independent fading among different time slots, the minimum time interval between two neighboring time slots should be significantly larger than the channel coherence time, namely, $\delta T >> T_c$.

Coherence time is the time duration over which the channel impulse response is considered to be constant. Channel variation is mainly due to Doppler effects. Using Clarke's model, the coherence time is often selected as $T_c = \sqrt{\frac{9}{16\pi}} f_D^{-1}$, where $f_D$ denotes the maximum Doppler frequency. Consider now that the system operates at carrier frequency of $1.9$ GHz.  In typical urban areas \cite{3GPPCh} with a mobile speed of 50 $km/h$, $f_D \approx 88$ Hz and $T_c \approx 4.8$ ms.

With a challenge-response approach shown in Fig. \ref{fig:codingView}, Alice starts the transmission of challenge signal at time $t_{a0}$, which arrives at Bob later at time $t_{a0} + \delta t$, where $\delta t$ denotes the transmission delay between Alice and Bob. Then, Bob sends back a response signal at time $t_{b0}$. Define $\Delta t_{ba}= t_{b0}-t_{a0}$. Clearly, $\Delta t_{ba} >  \delta t$. \textit{PHY-PCRAS depends on the reciprocity of the channel between Alice and Bob. It is understood that the channel keeps constant during the coherence time $T_c$ and hence the channel reciprocity requires that $\Delta t_{ba} < T_c - T_f$, as shown in Fig. \ref{fig:timeFreqAlloc}}.

\begin{figure*} 
   \centering
   \includegraphics[width=0.82\textwidth]{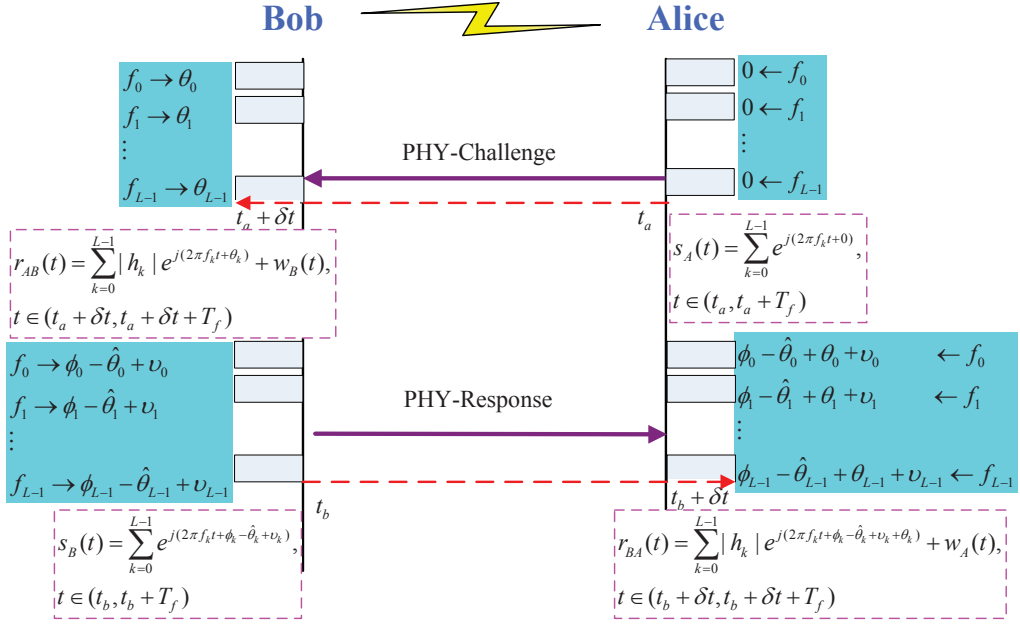}
   \caption{ ANA-PHY-PCRAS for OFDM transmission.}
   \label{fig:codingView}
\end{figure*}

\subsection{ANA-PHY-PCRAS}
For ease of description, we first assume that all the allocated subchannels are from a single OFDM symbol. Later, we shall present a modified ANA-PHY-PCRAS scheme for the time-separated subchannels shown in Fig. \ref{fig:timeFreqAlloc}. In what follows, we suppose that the shared keys between Alice and Bob are denoted as $\{\mathcal{K}_A,\mathcal{K}_B\}$, where each key can be considered as a sequence of random bits.

\subsubsection{PHY-Challenge}

Consider that Alice wants to start a conversation with Bob as shown in Fig. \ref{fig:codingView}.
Alice sends a ``challenge" frame to Bob starting at time instant $t_a$, which is employed by Bob for estimation of channel phases at multiple carriers. Essentially, Alice sends equal-phase modulated sinusoids ($x_k=1, k=0,1,\cdots,L-1$) at frequencies $f_0,f_1,\cdots,f_{L-1}$ during the period of a single OFDM symbol $t\in [t_a, t_a+T_f]$, namely,
\begin{eqnarray}
  \label{eq:cn}
     s_A(t) = \sum_{k=0}^{L-1} e^{j(2\pi f_k t + 0)}, t \in [t_a, t_a+T_f].
\end{eqnarray}

With perfect OFDM transmission, the waveforms $e^{j(2\pi f_k t)}$ can be viewed as ``mutually orthogonal'' \footnotemark\footnotetext{Actually, they are only orthogonal in the discrete time domain, the continuous form is employed to show the time-related issues for convenience.} at the receiver even they undergo multipath fading channels (after insertion and deletion of the CP).
Equivalently, the received signal at Bob can be represented as
\begin{eqnarray}
  \label{eq:cn}
     r_B(t) = \sum_{k=0}^{L-1} |h_k| e^{j(2\pi f_k t + \theta_k)} + w(t), t \in [t_a +\delta t, t_a+\delta t+T_f].
\end{eqnarray}
where $h_k = \sum_i \alpha_i(t) e^{-j 2\pi l_k \frac{\tau_i}{T_u}}, l_k \in \Lambda, k=0, 1, \cdots, L-1$ are assumed to be constant during $t\in [t_a, t_a+\delta t + T_f]$, and $\angle(h_k) = \theta_k$ are channel phase responses at $L$ subcarriers. Hence, a parallel fading channel model $y^B_k =  |h_k| e^{j \theta_k} + w_k, k=0, 1, \cdots, L-1$ is assumed with perfect carrier and timing recoveries (please refer to (\ref{eq:recSigModel})).

Then, Bob estimates the phase at each subcarrier $f_k$, namely,
\begin{eqnarray}
  \label{eq:diff}
     \hat{\theta}_k = \angle (y^B_k) =\theta_k + \Delta \hat{\theta}_k , k=0,1,\cdots,L-1.
\end{eqnarray}
where $\Delta \hat{\theta}_k$ denotes the estimation error. \textit{Noting that we use the absolute channel phase estimates $\hat{\theta}_k$ while the estimates of channel phase differences are employed in PHY-PCRAS \cite{WuCL2014}}. Compared to PHY-PCRAS, the direct estimate of channel phase simplifies the implementation and its robustness against the receiver oscillator remains unchanged as shown later.

\subsubsection{PHY-Response}

At this stage, Bob responds to Alice with a tagged signal, which encapsulates the shared key $\mathcal{K}_B=[\kappa_0,\kappa_1,\cdots,\kappa_{L-1}]^T$ in the form of
\begin{eqnarray}
  \label{eq:res}
     s_B(t) = \sum_{k=0}^{L-1} e^{j(2\pi f_k t + \varphi_k - \hat{\theta}_k + \upsilon_k)},  t \in [t_b, t_b+T_f].
\end{eqnarray}
where $\varphi_k = 2\pi \frac{\kappa_k}{M} \in \Omega, \kappa_k \in \{0,1,\cdots,M-1\}$ since we assume $M$-ary PSK modulation, and $\upsilon_k$ denotes the introduced artificial noise. We assume that $\upsilon_k, k=0,1,\cdots,L-1$ are independent and identically distributed (i.i.d.) with the same probability-density-function (pdf) $f_\upsilon(x)$. Here, we employ the Tikhonov distribution for $f_\upsilon(x)$, namely,
\begin{eqnarray}
   f_\upsilon(x)= \frac{e^{\beta \cos(x)}}{2\pi I_0(\beta)}, x\in (-\pi,\pi].
\end{eqnarray}
where $\beta \ge 0$ determines the dispersion of the distribution, and $I_0(\beta)$ is the modified Bessel function of
the first kind and 0-th order, and $x$ is confined to a support of length $2\pi$ in the vicinity of 0. The use of Tikhonov distributed artificial noise is due to the fact that the Tikhonov distribution maximizes the entropy when the mean and variance of $e^{j\upsilon}$ (or the circular mean and circular variance of $\upsilon$) are specified \cite{Tikohonovbook}.

Then, the received signal at Alice is given by
\begin{eqnarray}
  \label{eq:revA}
     r_A(t) &=&\sum_{k=0}^{L-1}   |h_k| e^{j\left(2\pi f_k t + (\varphi_k - \hat{\theta}_{k} + \upsilon_k) + \theta_k \right)} + w(t)  \nonumber \\
     &=& \sum_{k=0}^{L-1}   |h_k| e^{j \left(2\pi f_k t + \varphi_k -  \Delta\hat{\theta}_k + \upsilon_k \right)} + w(t),
\end{eqnarray}
where $t \in [t_b +\delta t, t_b+\delta t+T_f]$, and $\Delta\hat{\theta}_k=\hat{\theta}_k -\theta_k$.

With perfect carrier and timing recoveries, sampling the signal with frequency $\frac{1}{T_s}$ can obtain $N_f$ samples for each OFDM symbol, as shown in Section-II.B. After removing $N_g$ samples for the guard interval, $N$ samples are transformed using FFT to retrieve $L$ parallel channels (without ISI) at carriers $f_k, k=0, 1, \cdots, L-1$ as
\begin{eqnarray}
   y_k =  \rho_k e^{j \varphi_k} + w_k, k=0, 1, \cdots, L-1
\end{eqnarray}
with $\rho_k = |h_k| e^{j (-\Delta\hat{\theta}_k + \upsilon_k)}$ and $\text{Var}\{w_k\}=\gamma_s^{-1}$.

Hence, the received vector in its complex form can be written as
\begin{eqnarray}
  \label{eq:recv}
     \mathbf{y} = \left[\rho_0 \tilde{\kappa}_0, \rho_1 \tilde{\kappa}_1, \cdots, \rho_{L-1} \tilde{\kappa}_{L-1} \right]^T + \mathbf{w},
\end{eqnarray}
where $\tilde{\kappa}_k=e^{j 2\pi \frac{\kappa_k}{M}}, k=0,1,\cdots,L-1$.

\subsection{Verification}
To complete the authentication process, Alice requires verifying whether the response signal $\mathbf{y}$ is from Bob or not. If the response signal is not from Bob but Eve (an impersonation attacker), it is assumed that Eve generates a length-$L$ $M$-ary random vector $\mathcal{K}_E$ for authentication as there is no information about $\mathcal{K}_B$ available to Eve. Essentially, this is cast as a binary hypothesis testing problem \cite{Maurer}:
\begin{eqnarray}
  \label{eq:cn}
     H_1 &:&  \mathcal{K}_t = \mathcal{K}_B  \nonumber \\
     H_0 &:&  \mathcal{K}_t = \mathcal{K}_E
\end{eqnarray}
where $\mathcal{K}_t$ denotes the acknowledged key.

The optimum binary hypothesis testing was formulated in \cite{WuCL2014}, which is difficult to solve in general. Instead, we propose to use the test statistic
\begin{eqnarray}
  \label{eq:cn}
   \zeta =|\eta|^2,  \eta = \mathcal{K}_B^\dag \mathbf{y},
\end{eqnarray}
where $\mathbf{x}^{\dagger}$ denotes the conjugate transpose of $\mathbf{x}$. Then, $\zeta$ is compared to a threshold value $\iota$ for making a final decision.

In both hypotheses, $\eta$ is the sum of $L$ dependent identically-distributed random variables, which could be approximately regarded as normally distributed for large $L$ from the central limit theorem, especially when the dependence among random variables is weak \footnotemark\footnotetext{The use of i.i.d. artificial noise over time in ANA-PHY-PCRAS makes the dependence among random variables weaker.}. Hence, $\zeta =|\eta|^2$ is noncentrally chi-squared distributed with 2 degrees of freedom, the pdf of which can be expressed as
\begin{eqnarray}
  \label{eq:mod2}
   f_\zeta(x) = \frac{1}{\sigma^2_{H_i}}  e^{-\frac{x+\lambda}{\sigma_{H_i}^2}} I_{0}\left(\frac{2\sqrt{x \lambda}}{\sigma_{H_i}^2}\right),
\end{eqnarray}
where $E\{\zeta\}=\sigma^2_{H_i}+\lambda$ and $\text{Var}\{\zeta\}=2\sigma^2_{H_i}(\frac{1}{2}\sigma^2_{H_i}+\lambda)$ under hypothesis $H_i, i=0,1$. In \cite{RiceEst}, it was shown that $\lambda$ and $\sigma^2_{H_i}$ can be estimated from the moments of $\zeta$ as
\begin{eqnarray}
  \label{eq:riceEst}
   \lambda &=& \sqrt{2 E^2\{\zeta\} - E\{\zeta^2\}}, \nonumber \\
    \sigma_{H_i}^2 &=&  E\{\zeta\} - \lambda .
\end{eqnarray}

\textit{We point out that  the use of $|\mathcal{K}_B^\dag \mathbf{y}|$ for verification makes $\zeta$ unchanged for any random but constant phase rotation among all subcarriers}. \textit{Therefore, the estimate of phase differences $\Delta \theta_{k0}=\theta_k - \theta_0, k=1, \cdots, L-1$ between subcarriers in PHY-PCRAS \cite{WuCL2014} is simply replaced by the direct estimate of subcarrier phases $\theta_k, k=0, 1, \cdots, L-1$}. Even if the receiver oscillator may introduce a random but constant phase rotation among all subcarriers, it does not pose a challenge for practical implementation if there is only one single oscillator in the receiver for all subcarriers.  Furthermore, there is no stringent requirement on a common time reference between users due to the use of noncoherent metric, which is in sharp contrast to the secret generation approach proposed in \cite{INFOCOM2011}.

\subsection{Modified ANA-PHY-PCRAS for Time-Separated Subchannel Allocation}
Consider the time-separated subchannel allocation scheme shown in Fig. \ref{fig:timeFreqAlloc}. With a total of $J$ time slots ($t_m, m=0,\cdots,J-1$), a key can be divided into $J$ sub-keys, namely, $\mathcal{K}_B =[\mathcal{K}_0^T,\cdots,\mathcal{K}_{J-1}^T]^T$, and each sub-key can be delivered through $L'$ carriers.

When Alice challenges at $J$ time instants $t_{am}, m=0,1,\cdots,J-1$ with $L'$ subcarriers for each time instant, Bob extracts $L'$ subcarrier phases at each time instant, and responds to Alice at time instant $t_{bm}$ with a tagged signal containing the $m$-th sub-key $\mathcal{K}_m$. Finally, the received signal at Alice during $t \in [t_{bm}+\delta t, t_{bm}+ \delta t + T_f]$ in a base-band complex vector form can be written as
\begin{eqnarray}
  \label{eq:tr}
     \mathbf{y}(t_{m}) =  e^{j \theta_o (t_{m})} \cdot \left[\rho_0(t_{m}) \tilde{\kappa}_0,  \cdots, \rho_{L-1}(t_{m}) \tilde{\kappa}_{L-1} \right]^T \nonumber  + \mathbf{w}(t_{m}),
\end{eqnarray}
where $\theta_o(t_{m})$ denotes a random but constant phase due to the receiver's oscillator during $t\in [t_{am},t_{bm}+T_f]$, and $\rho_k (t_{m}) = |h_k| e^{j \left[-\Delta\hat{\theta}_k(t_{m}) + \upsilon_k(t_m)\right]}$.

For the robustness of implementation, we always assume that $\theta_o(t_m), m=0, 1, \cdots, J-1$ are independently random variables over $(-\pi,\pi]$, which means that channel phase discontinuity is observed over far-separated time slots. Hence, this discontinuity at different time slots should be seriously considered for verification, and a noncoherent combining method is preferred. Here, we propose a suboptimum hypothesis testing method, which employs a noncoherent combining metric
\begin{eqnarray}
  \label{eq:mod1}
   \zeta = \sum_{m=0}^{J-1} \left|\eta_m\right|^2,  \eta_m = \mathcal{K}_m^\dag \mathbf{y}(t_m).
\end{eqnarray}

With sufficient separation in time, $\eta_m$'s are independent complex Gaussian variables of the same variance. The sum of squares of $J$ independent complex Gaussian variables of the same variance is noncentrally chi-squared distributed with $2J$ degrees of freedom, which yields the pdf of
\begin{eqnarray}
  \label{eq:mod3}
   f_\zeta(x) = \frac{1}{\sigma^2_{H_i}} \left(\frac{x}{\lambda}\right)^{\frac{J-1}{2}} e^{-\frac{x+\lambda}{\sigma_{H_i}^2}} I_{J-1}\left(\frac{2\sqrt{x \lambda}}{\sigma_{H_i}^2}\right),
\end{eqnarray}
where both $\lambda$ and $\sigma^2_{H_i}$ can be again estimated from the moments of $\zeta$ as shown in (\ref{eq:riceEst}).

The cumulative distribution of $\zeta$ can be described by the generalized Marcum Q-function, which is given by
\begin{equation}
F_\zeta(x|H_i)=1-Q_J\left(\frac{\lambda}{\sigma^2_{H_i}},\frac{x}{\sigma^2_{H_i}}\right), i=0,1
\end{equation}
with $Q_J(a,b)=\int_b^{+\infty} \left(\frac{x}{a}\right)^{\frac{J-1}{2}} e^{-(x + a)} I_{J-1}(2\sqrt{ax})dt$.

The authentication is typically claimed if $\zeta\ge \iota$. The threshold $\iota$ of this test is determined for a false acceptance rate (or false alarm probability) $P_f$ according to the distribution of $\zeta|H_0$
\begin{eqnarray}
    \iota = \arg \max_{\iota'}Q_J\left(\frac{\lambda}{\sigma^2_{H_0}},\frac{ \iota'}{\sigma^2_{H_0}}\right) \le  P_f.
\end{eqnarray}
The successful authenticate rate (or detection probability) can be simply computed as
\begin{eqnarray}
        P_D = Q_J\left(\frac{\lambda}{\sigma^2_{H_1}},\frac{ \iota}{\sigma^2_{H_1}}\right).
\end{eqnarray}

Compared to ANA-PHY-PCRAS,  the use of (\ref{eq:mod1}) results in noncoherent combining loss for the modified scheme, which, however, does not require the assumption of phase continuity among different time slots.

\section{Practical Issues with Nonideal OFDM Transmission}
\subsection{Practical Issues}
For a practical OFDM receiver, there is often a local carrier frequency oscillator for demodulation, with which the received radio signal can be converted from radio frequency into baseband. Then, the baseband signal is sampled and discrete-time samples are obtained for subsequent processing, where the sampling clock is derived from a local oscillator. Practically, both timing and carrier references are asynchronous between the transmitter and receiver.  Hence, in a real-world passband transmission system, the following
parameters can cause disturbances in the receiver.
\begin{enumerate}
\item  The carrier frequency oscillator for demodulation at the receiver can be different with the transmitter oscillator, resulting in a carrier frequency offset of $\Delta f$ and a random but constant phase offset of $\Phi_0$.

\item  The sampling time at the receiver has a constant symbol offset $\varepsilon=n_\varepsilon T_s$ compared to the transmitter time.

\item  The sampling time at the receiver has a sampling clock frequency offset of $\varsigma = (T'_s-T_s)/T_s$ compared to the transmitter time, where the sampling period $T'_s$ employed at the receiver is deviated from the desired sampling period $T_s$.
\end{enumerate}

\textit{For simplicity of notation and in order to focus on the pure imperfections at the receiver, we do not include the artificial noise in this section, which, however, is fully considered in simulations}.

\subsection{The Effect of Carrier Frequency Offset}
Whenever the condition 1) occurs, the received samples can be written as
\begin{eqnarray}
  r_n = r((n+N_g)T_s) = \sum_i \alpha_i s(t_n-\tau_i)  e^{j(2\pi n\Delta f T_s + \Phi_0)} = e^{j \Phi_0}\sum_k x_k h_k e^{j2\pi n\frac{k+\vartheta}{N}},
\end{eqnarray}
where  $\vartheta=\Delta f T_u$, and $N_g \Delta f T_s$ is included in $\Phi_0$ for convenience.
As the multipath channel is assumed to be constant during at least one OFDM symbol, we simply use $\alpha_i$ instead of $\alpha_i(t)$ for the $i$th path gain.

After the removal of guard interval from the received samples, the application of FFT yields
\begin{equation}
  \label{eq:rec11}
   y_k = e^{j2\pi(\vartheta\frac{N-1}{2N}+\Phi_0)} \frac{\sin(\pi \vartheta)}{N \sin(\frac{\pi \vartheta}{N})}  h_k  x_k + i_k + w_k,
\end{equation}
where
\begin{equation}
   i_k = e^{j2\pi \Phi_0} \sum_{l \neq k} e^{j2\pi\left((l-k+\vartheta)\frac{N-1}{2N}\right)} \frac{\sin(\pi \vartheta)}{N \sin(\frac{\pi (l-k+\vartheta)}{N})} h_l x_l
\end{equation}
denotes the interchannel interference (ICI).  Due to the use of noncoherent metric (\ref{eq:mod1}) for verification, the extra phase $2\pi(\vartheta\frac{N-1}{2N}+\Phi_0)$ has no impact.

It should be noted that with the presence of carrier frequency offset, the direct loss in SNR  is $-\log10\left(\frac{\sin(\pi \vartheta)}{N \sin(\frac{\pi \vartheta}{N})}\right)$ dB and the frequency offset noise power due to the introduction of ICI $i_k$ can be approximated by \cite{OFDMRev-I}
\begin{equation}
   \sigma_i^2 \approx \frac{\pi^2}{3} (\Delta f T_u)^2
\end{equation}
for the normalized channel gains, namely, $E\left\{|h_k|^2\right\}=1$.

\subsection{The Effect of Sampling Offset}
With a non-zero symbol offset $\varepsilon =n_\varepsilon T_s$, the channel impulse response ``seen'' by the receiver is also shifted in the time scale by $\varepsilon$, which yields
\begin{eqnarray}
  \label{eq:chModelerr}
   h_\varepsilon(\tau,t) = h(\tau-\varepsilon,t-\varepsilon) =\sum_i \alpha_i(t-\varepsilon ) \delta(\tau-\tau_i-\varepsilon )
                         \approx \sum_i \alpha_i(t) \delta(\tau-\tau_i-n_\varepsilon T_s).
\end{eqnarray}
since $\alpha_i(t)$ is assume to be constant during at least one OFDM symbol. Just like in (\ref{eq:hk}), the equivalent channel gain at the $k$th carrier can now be written as
\begin{eqnarray}
  h_k^\varepsilon(t_n) = \sum_i \alpha_i(t_n) e^{-j 2\pi k \frac{\tau_i+n_\varepsilon T_s}{T_u}} = h_k (t_n) e^{-j 2\pi n_\varepsilon k/N}.
\end{eqnarray}

With a time-shift of $n_\varepsilon T_s$, the input samples for demodulation are also shifted by $n_\varepsilon$, which results in both intersymbol interference (ISI) and ICI. The ISI arises since one OFDM symbol window with a nonzero shift $n_\varepsilon \neq 0$ will actually be covered by two OFDM symbols, while ICI is due to the corruption of orthogonality among subcarriers when $n_\varepsilon \neq 0$.
Hence, by neglecting a minor loss ($\frac{N-n_\varepsilon}{N}$) in SNR for large $N$, demodulation of the subcarrier via FFT yields \cite{OFDMRev-I}
\begin{equation}
  \label{eq:rec1}
   y_k = e^{j2\pi (k/N) n_\varepsilon} h_k  x_k + i_k +  w_k,
\end{equation}
where  $i_k$ is the disturbance caused by both ICI and ISI. The disturbance can be well approximated by Gaussian noise with power \cite{OFDMRev-I}
\begin{equation}
   \sigma_\varepsilon^2 \approx \sum_i |\alpha_i(t)|^2 \left(2 \frac{\Delta \varepsilon_i}{N} - \left(\frac{\Delta \varepsilon_i}{N}\right)^2 \right),
\end{equation}
where
\begin{eqnarray}
  \Delta \varepsilon_i = \left\{\begin{array}{c}  n_\varepsilon -\frac{\tau_i}{T_s}, \quad \quad \quad \quad \quad \quad  n_\varepsilon T_s > \tau_i \\
    \frac{\tau_i - T_g}{T_s} - n_\varepsilon, \quad  0< n_\varepsilon T_s <-(T_g - \tau_i) \\
   0,  \quad \quad \quad \quad \text{otherwise}\end{array}\right.
\end{eqnarray}

With a challenge-response process, ANA-PHY-PCRAS involves two rounds of communications. Hence,  the receiver imperfections from both Alice and Bob should be considered together. Let $n^a_\varepsilon$, $n^b_\varepsilon$ be the normalized sampling symbol offsets of Alice's and Bob's receivers, respectively. When Alice challenges, Bob estimates the channel phase at subcarrier $f_k$. With the sampling symbol offset $n^b_\varepsilon$, this phase estimate must include an extra increment over frequency, namely,
\begin{equation}
 \hat{\theta}_{k} =   \theta_{k} + 2\pi n^b_\varepsilon \cdot \frac{l_k}{N} + \theta^e_k,
\end{equation}
where $\theta^e_k$ is the non-biased estimation error with zero mean, and $l_k = l_0 + k \Delta \ell$.

When Bob responds to Alice, Alice also introduces her sampling symbol offset $n^a_\varepsilon$, and she can finally manage to obtain $L$ parallel channels at subcarriers $f_k, k=0, \cdots, L-1$ as
\begin{eqnarray}
\label{eq:fy}
   y_k =  \rho_k e^{j \theta_\varepsilon} e^{j(\varphi_k + k \varpi)} + i_k +  w_k,  k=0, \cdots, L-1
\end{eqnarray}
where $\varpi=  2\pi (n^a_\varepsilon - n^b_\varepsilon) \cdot \frac{\Delta \ell}{N}$, $\theta_\varepsilon = 2\pi (n^a_\varepsilon - n^b_\varepsilon) \cdot \frac{l_0}{N}$, $\rho_k = |h_k| e^{-j\theta^e_k}$ and $i_k$ denotes the interference due to the sampling offset $n^a_\varepsilon T_s$ at Alice.

\subsection{The Effect of Sampling Clock Frequency Offset }
With a sampling clock period of $T'_s$, the received samples at $t'_n = (n+ N_g)T'_s$ can be written as
\begin{eqnarray}
  r_n &\triangleq & r(t'_n) = \sum_i \alpha_i s(t'_n-\tau_i) = \sum_i \alpha_i \sum_k x_k e^{j2\pi \frac{k}{T_u} ((n+N_g)T'_s-T_g-\tau_i)} \nonumber \\
      &=& \sum_i \alpha_i \sum_k x_k e^{j2\pi \frac{k}{N} \left[n(1+\varsigma)+N_g \varsigma-\frac{\tau_i}{T_s}\right]} \nonumber \\
      &=& \sum_k \left(x_k e^{j2\pi k \frac{N_g\varsigma}{N}}\right) h_k e^{j2\pi n\frac{k+k\varsigma}{N}}.
\end{eqnarray}
Demodulation of the subcarrier yields \cite{OFDMRev-I}
\begin{equation}
  \label{eq:recSigErr}
   y_k = e^{j2\pi k (\frac{N_g\zeta}{N})} e^{j2\pi(\vartheta'\frac{N-1}{2N})} \frac{\sin(\pi \vartheta')}{N \sin(\frac{\pi \vartheta'}{N})}  h_k  x_k + i_k + w_k,
\end{equation}
where $\vartheta' = k\varsigma$ and $i_k$ is the disturbance caused by ICI.

Consider a sampling clock frequency offset up to $\pm$100 ppm ($\varsigma=10^{-4}$) for an OFDM system of $N=2048$ subcarriers.  The multiplicative factor $\frac{\sin(\pi \vartheta')}{N \sin(\frac{\pi \vartheta'}{N})}$ results in some loss in SNR, which is less than 0.3 dB in the worst carrier. The sampling frequency offset also results in an incremental phase rotation over subcarriers, which is the same to (\ref{eq:fy}).

\subsection{Verification under Practical Imperfections}
With a challenge-response approach, we focus on the final verification in the response stage. As depicted in Section-III.C, an equivalent frequency offset due to sampling offset at the stage of challenging should be considered.

By including all the above imperfections, the demodulated subcarrier at $f_k$ is given by
\begin{equation}
  \label{eq:imodel}
   y_k =  e^{j( k \varpi + \phi_0)}  h_k  x_k + i_k + w_k,
\end{equation}
where
\begin{eqnarray}
\label{eq:resfq}
 \varpi = 2\pi \frac{(n^a_\varepsilon - n^b_\varepsilon) \Delta \ell + N_g\varsigma + (N-1)\varsigma/2}{N},
\end{eqnarray}
\begin{eqnarray}
\phi_0 = \pi \Delta f T_u (N-1)/N + 2\pi (n^a_\varepsilon - n^b_\varepsilon) l_0/N +  \Phi_0,
\end{eqnarray}
and $i_k$ is the disturbance caused by both ICI and ISI.

Consider the modified ANA-PHY-PCRAS for the time-separated subchannels. With the channel model (\ref{eq:imodel}) under practical imperfections, we propose to employ a refined non-coherent combining metric
\begin{eqnarray}
  \label{eq:rmod}
   \zeta = \max_{\varpi}\sum_{m=1}^J \left|\mathcal{K}_m^\dag \Lambda(\varpi) \mathbf{y}(t_m)\right|^2,
\end{eqnarray}
where $\Lambda(\varpi)=\text{diag}(1, e^{-j\varpi}, e^{-j 2\varpi},\cdots, e^{-j (L-1)\varpi})$, and $J$ time slots starting at $t_m, m=0, 1, \cdots, J-1$ are employed. Compared to (\ref{eq:mod1}), the refined metric includes the effect of residual frequency-offset (\ref{eq:resfq}) due to various imperfections.

For the Scenario. \ref{s1} with $\Delta \ell=128$, we have that $\frac{\Delta \ell}{N}=\frac{1}{16}$, which can result in a very large frequency offset (\ref{eq:resfq}) even with a small value of $|n^a_\varepsilon - n^b_\varepsilon|$. Therefore, the search of frequency shown in (\ref{eq:rmod}) should be seriously considered in practice. Noting that the contribution of $\frac{N_g\varsigma + (N-1)\varsigma/2}{N}$ in (\ref{eq:resfq}) due to sampling clock frequency offset is minor compared to sampling offset.

\section{Security Analysis}
In this section, security analysis is presented. For ease of analysis, we focus on the basic ANA-PHY-PCRAS over a single OFDM symbol.

\subsection{Noncoherent Channel Model for Eavesdropping}
As a passive attacker, Eve only monitors all frames inside the network during authentication, and tries to learn $(\mathcal{K}_A, \mathcal{K}_B)$ from whatever it gets.

By monitoring the response signal from Bob, the received signal at Eve is given by
\setlength{\arraycolsep}{0.0em}
\begin{eqnarray}
  \label{eq:pa}
     r_E(t) =\sum_{k=0}^{L-1}  |\tilde{h}_k| \cos \left(2\pi f_k t + (\varphi_k - \hat{\theta}_k + \upsilon_k) + \tilde{\theta}_k \right) + w_E(t),
\end{eqnarray}
\setlength{\arraycolsep}{0.5em}where $\tilde{h}_k = |\tilde{h}_k|e^{j\tilde{\theta}_k}$, $\tilde{\theta}_k$ is Eve's channel-phase response when Bob transmits a zero-phase sinusoidal signal at frequency $f_k$, $\hat{\theta}_k$ is Bob's estimate of channel response $\theta_k$ when Alice challenges,  and $w_E(t)$ is the noise process observed by Eve.

Due to the orthogonality among different subcarriers, one can retrieve the discrete signal vector from (\ref{eq:pa}) as $z_0^{L-1}=[z_0,\cdots,z_{L-1}]^T$, where
\begin{eqnarray}
\label{eqn:p1}
   z_k = |\tilde{h}_{k}|e^{j\psi_k} e^{j \varphi_{k}} + w_{k},
\end{eqnarray}
and $\psi_k = (\tilde{\theta}_{k} - \hat{\theta}_{k}) + \upsilon_k$.

For security analysis, we focus on the key equivocation or the conditional equivocation about the key, namely, $H(\mathcal{K}_B | z_0^{L-1})$.
As
\begin{eqnarray}
   H(\mathcal{K}_B | z_0^{L-1}) = H(\mathcal{K}_B) - I(z_0^{L-1}; \mathcal{K}_B),
\end{eqnarray}
where $I(X;Y)$ denotes the mutual information between two random variables $X$ and $Y$, it is equivalent to compute the mutual information $I(z_0^{L-1}; \mathcal{K}_B)$ or its bound.
If $I(z_0^{L-1}; \mathcal{K}_B) \le \delta H(\mathcal{K}_B)$, it follows that $H(\mathcal{K}_B | z_0^{L-1}) \ge (1-\delta) H(\mathcal{K}_B)$. Hence,  the successful probability for an eavesdropper to guess the key is about $2^{-(1-\delta) |\mathcal{K}_B|}$. In the ideal case of $I(z_0^{L-1}; \mathcal{K}_B)=0$, we have that $H(\mathcal{K}_B | z_0^{L-1})= H(\mathcal{K}_B)$, which means that the successful probability for an eavesdropper to guess the key is about $2^{-|\mathcal{K}_B|}$, the same as a random guess. Whenever $I(z_0^{L-1}; \mathcal{K}_B) = 0$, information-theoretic security is ensured.

With a noncoherent metric for verification, the shared key $\mathcal{K}_B$ is essentially conveyed in the differences of modulated phase sequence $\varphi_0^{L-1}$. This means that we are interested in the noncoherent channel model of (\ref{eqn:p1}), where the mutual information $I(z_0^{L-1}; \mathcal{K}_B)$ is determined by the sequence of phase differences $\left\{\Delta \psi_k =\psi_k - \psi_{k-1}\right\}_{k=1}^L$, but not on $\psi_0$. To be more rigourous for security analysis, we always assume that Eve has the complete knowledge about the channel, which means that $\Delta \tilde{\theta}_k =0$ (as it can be perfectly compensated by Eve). Since $\Delta \psi_k = \Delta \tilde{\theta}_k - \Delta \hat{\theta}_k + \Delta \upsilon_k$, we have that $\Delta \psi_k = -\Delta \hat{\theta}_k + \Delta \upsilon_k$, or
\begin{equation}
\label{eq:phi}
\psi_k = - \hat{\theta}_k + \upsilon_k + \lambda,
\end{equation}
where $\lambda$ denotes an unknown but constant phase rotation over the subchannel index $k$. Here, $\lambda$ is often assumed to be uniformly distributed over $(-\pi,\pi]$.

\subsection{Information-Theoretic Security under Independent Parallel Fading Channels}
For wireless rich-scattering fading channels, the observations of Eve remain independent from the channel-specific observations of
Alice and Bob, if Eve is located more than half a wavelength away from these two users \cite{SecureTrans,ITYE}. In this case, Eve cannot get a feasible estimate about $\theta_{k}$ based on the monitoring signal when Alice initiates a challenge. Hence, it is fair to assume that Eve has no any knowledge about either $\theta_k$ or $\hat{\theta}_k$.

\begin{lem}
Let $\theta_1, \theta_2 \in (-\pi, \pi]$ be two random variables on a circle and $\theta=\theta_1+\theta_2 \mod 2\pi$, where $\theta \in (-\pi, \pi]$. If $\theta_1$ is uniformly distributed over $(-\pi, \pi]$ and $\theta_2$ is independent of $\theta_1$,  it follows that $\theta$ is also uniformly distributed over $(-\pi, \pi]$, which is irrespective of the distribution of $\theta_2$.
\end{lem}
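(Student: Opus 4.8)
The plan is to prove that the uniform law on the circle acts as an absorbing element under independent modular addition, and the cleanest route is through the circular characteristic function, i.e.\ the sequence of trigonometric moments (Fourier coefficients). First I would recall the uniqueness principle: a probability distribution on $(-\pi,\pi]$ is determined entirely by its Fourier coefficients $c_n = E\!\left[e^{jn\Theta}\right]$, $n\in\mathbb{Z}$, since $\{e^{jn\theta}\}_{n\in\mathbb{Z}}$ is complete and two circular distributions with identical coefficients must coincide. Thus it suffices to show that $\theta$ has exactly the coefficients of the uniform law, namely $c_0=1$ and $c_n=0$ for every $n\neq 0$.

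Second I would dispose of the modular reduction, which is the only place a subtlety can hide. Because $\theta$ and $\theta_1+\theta_2$ differ by an integer multiple of $2\pi$, we have $e^{jn\theta}=e^{jn(\theta_1+\theta_2)}$ for every integer $n$ (the wrapping contributes a factor $e^{jn2\pi m}=1$). Hence the ``$\bmod\,2\pi$'' operation is invisible to every Fourier coefficient, and I may compute with $\theta_1+\theta_2$ directly. Invoking independence and then the uniformity of $\theta_1$ gives
\[
c_n = E\!\left[e^{jn\theta_1} e^{jn\theta_2}\right] = E\!\left[e^{jn\theta_1}\right] E\!\left[e^{jn\theta_2}\right],
\]
and since $\theta_1$ is uniform, $E\!\left[e^{jn\theta_1}\right]=\frac{1}{2\pi}\int_{-\pi}^{\pi} e^{jn\theta}\,d\theta$ equals $1$ for $n=0$ and $0$ for $n\neq 0$. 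Therefore $c_n=0$ for all $n\neq 0$ regardless of the value of $E\!\left[e^{jn\theta_2}\right]$. This is precisely where the clause \emph{irrespective of the distribution of $\theta_2$} falls out for free: the vanishing factor $E\!\left[e^{jn\theta_1}\right]$ annihilates the product no matter what $\theta_2$ contributes. Matching these coefficients against the uniform ones and applying the uniqueness principle completes the proof.

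The main obstacle is bookkeeping rather than conceptual: one must state the uniqueness theorem for circular distributions carefully enough to cover an arbitrary (possibly discrete or mixed) law for $\theta_2$, so that no implicit density assumption on $\theta_2$ sneaks in. An alternative, more elementary argument avoids Fourier analysis by conditioning on $\theta_2=t$ and using translation invariance of Lebesgue measure on the circle to observe that $\theta_1+t \bmod 2\pi$ is uniform for each fixed $t$; integrating this conditional uniformity against the law of $\theta_2$ via the tower property then yields the result. I would favor the Fourier route because it handles the ``any distribution of $\theta_2$'' requirement most transparently and sidesteps the need to justify mixing the conditional computation against an arbitrary measure.
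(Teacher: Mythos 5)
Your proof is correct, but it takes a genuinely different route from the paper. The paper argues directly at the level of densities: it writes the (circular) convolution
\begin{equation*}
f_\theta(x) \;=\; \int_{-\pi}^{\pi} f_{\theta_1}(t)\, f_{\theta_2}(x-t)\,dt \;=\; \frac{1}{2\pi}\int_{-\pi}^{\pi} f_{\theta_2}(x-t)\,dt \;=\; \frac{1}{2\pi},
\end{equation*}
i.e.\ it pulls the constant $\tfrac{1}{2\pi}$ out of the integral and uses that the remaining integral of a density over a full period equals one. This is a one-line argument, but it implicitly assumes that $\theta_2$ possesses a density and that $f_{\theta_2}$ is understood as periodically extended so that the wrap-around in $x-t$ is handled correctly. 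Your Fourier-coefficient argument dispenses with both of these implicit assumptions: the trigonometric moments $E[e^{jn\theta_2}]$ exist for an arbitrary law of $\theta_2$ (discrete, mixed, or singular), the modular reduction is rigorously invisible since $e^{jn2\pi m}=1$, and the product formula $c_n = E[e^{jn\theta_1}]\,E[e^{jn\theta_2}]$ with the vanishing uniform coefficients kills every $n\neq 0$ term. So your route delivers the ``irrespective of the distribution of $\theta_2$'' clause in full generality, at the modest price of invoking the uniqueness theorem for circular distributions determined by their Fourier coefficients, which you correctly flag as the point needing careful statement. Your alternative conditioning argument (conditional uniformity of $\theta_1 + t \bmod 2\pi$ for each fixed $t$, then the tower property) is closer in spirit to the paper's convolution but likewise avoids the density assumption; either version is a strictly more general and more rigorous proof than the one in the paper.
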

\begin{proof}
Let $f_{\theta_1} (x), f_{\theta_2}(x), f_\theta(x)$ denote the pdfs of $\theta_1, \theta_2, \theta$, respectively. For a uniformly distributed random variable on a circle, we have that $f_{\theta_1}(x)=\frac{1}{2\pi}$ if $x\in (-\pi,\pi]$, zeros otherwise. Since $\theta_2$ is independent of $\theta_1$, it follows that
\begin{eqnarray*}
  f_\theta(x) = \int_{-\pi}^\pi f_{\theta_1}(t) f_{\theta_2}(x-t)dt = \frac{1}{2\pi}\int_{-\pi}^\pi f_{\theta_2}(x-t)dt=\frac{1}{2\pi}
\end{eqnarray*}
for $x\in (-\pi, \pi]$.
\end{proof}

If the $L$ parallel fading channels at subcarriers $f_k, k=0, 1, \cdots, L-1$ between Alice and Bob are independent, we have that either $\theta_{k}$  or their estimates $\hat{\theta}_{k}, k=0, \cdots, L-1$ are i.i.d, each of which is uniformly distributed over $(-\pi,\pi]$. Since Eve's channel phase response $\tilde{\theta}_k$ is independent of $\hat{\theta_k}$ and by noting Lemma 1,  it is clear that $\psi_k, k=0, 1, \cdots, L-1$ (\ref{eq:phi}) are also i.i.d and uniformly distributed over $(-\pi,\pi]$. This means that
\begin{eqnarray}
 I(z_0^{L-1}; \mathcal{K}_B) = 0.
\end{eqnarray}
Therefore, there is no hope for Eve to extract any reliable information about the key $\mathcal{K}_A$. In this case, information-theoretic security can be perfectly ensured.

\subsection{Equivocation Analysis for Static Parallel Channels}
The worst case for the purpose of authentication is to consider the scenario, where the $L$ parallel channels between Bob and Alice (or Eve) are all assumed to be static over a long period. This means that  $\theta_k$ can be well estimated before the start of authentication and further compensated in (\ref{eq:phi}) by Eve, who may get a clean version of the received signal
\begin{eqnarray}
\label{eqn:EveCl}
   z_k = |\tilde{h}_{k}| e^{j(\varphi_k + \upsilon_k + \lambda)} + w_k, k=0,1,\cdots,L-1.
\end{eqnarray}

As Eve can be located very close to Bob, her observation may be free of noise, which is the worst case for addressing the security issue.  In this case, Eve can directly extract the phase of $z_k$, namely,
\begin{eqnarray}
    \phi_k = \varphi_k + \upsilon_k + \lambda, k=0,1, \cdots, L-1
\end{eqnarray}
where $\phi_k=\angle(z_k)$.

Hence, the mutual information between $z_{0}^{L-1}$ and $\mathcal{K}_B$ can now be computed as
\begin{eqnarray}
   I(z_0^{L-1}; \mathcal{K}_B) = I(\phi_0^{L-1}; \varphi_0^{L-1}) = E_{\phi_0^{L-1},\varphi_0^{L-1}}\log_2 \frac{p(\phi_0^{L-1}|\varphi_0^{L-1})}{p(\phi_0^{L-1})},
\end{eqnarray}
where
\begin{eqnarray}
  p(\phi_0^{L-1}|\varphi_0^{L-1}) &=&  \int_\lambda p(\phi_0^{L-1}|\varphi_0^{L-1}, \lambda)p(\lambda) d\lambda \nonumber \\
                                  &=&  \int_\lambda \prod_k f_\upsilon(\phi_k-\varphi_k-\lambda)p(\lambda) d\lambda \nonumber \\
                                  &=& \frac{I_1\left(\beta\sqrt{\left(\sum_{k=0}^{L-1}\cos (\phi_k- \varphi_k) \right)^2 + \left(\sum_{k=0}^{L-1}\sin (\phi_k- \varphi_k) \right)^2}\right)}{\left[2\pi I_0(\beta)\right]^L}
\end{eqnarray}
with $I_1(x)$ denoting the modified Bessel function of the first kind and 1-th order.  Since $\varphi_k$'s are uniformly, i.i.d. over the discrete values,
\begin{eqnarray}
  p(\phi_0^{L-1}) = \sum_{\varphi_0^{L-1} \in \Omega^L} p(\phi_0^{L-1}|\varphi_0^{L-1}) P(\varphi_0^{L-1}),
\end{eqnarray}
where $P(\varphi_0^{L-1})=\frac{1}{M^L}$ for equally-distributed $M$-PSK constellations.

\begin{lem}
Consider the worst case scenario, where the channel between Alice and Bob is static and Eve can get a noise-free version of the transmitted signal by either Alice or Bob. With ANA-PHY-PCRAS, Eve's key equivocation can be lower bounded as
\begin{eqnarray}
\label{eq:lowBound}
   H(\mathcal{K}_B | Z_0^{L-1}) \ge  L\cdot E_{\varphi,\upsilon}\left[\log_2\frac{\sum_{\bar{\varphi}\in \Omega}f_\upsilon(\varphi+\upsilon-\bar{\varphi})}{f_\upsilon(\upsilon)}\right]
\end{eqnarray}
if the introduced artificial noise is with the pdf of $f_\upsilon(x)$.
\end{lem}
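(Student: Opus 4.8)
The plan is to reduce the claim to a single-symbol computation by a genie-aiding argument on the nuisance phase $\lambda$. Writing $\phi_k=\angle(z_k)=\varphi_k+\upsilon_k+\lambda$ from the noise-free static model (\ref{eqn:EveCl})--(\ref{eq:phi}), and using that $\varphi_k=2\pi\kappa_k/M$ is a bijection of $\kappa_k$, I first rewrite the target as $H(\mathcal{K}_B\mid Z_0^{L-1})=H(\varphi_0^{L-1}\mid \phi_0^{L-1})$. Since conditioning never increases entropy, I can hand Eve the value of $\lambda$ for free and obtain $H(\varphi_0^{L-1}\mid \phi_0^{L-1})\ge H(\varphi_0^{L-1}\mid \phi_0^{L-1},\lambda)$; this is the only inequality in the whole argument, and everything after it is an identity.

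Next I would show that, once $\lambda$ is known, the channel collapses to $L$ independent copies of the memoryless phase channel $\tilde\phi=\varphi+\upsilon$. Conditioned on $\lambda$, the shift $\tilde\phi_k\triangleq\phi_k-\lambda=\varphi_k+\upsilon_k$ is a bijection of $\phi_k$, so $(\phi_0^{L-1},\lambda)$ and $(\tilde\phi_0^{L-1},\lambda)$ carry the same information; and because $\lambda$ is independent of the key and of the artificial noise, it is independent of $(\varphi_0^{L-1},\tilde\phi_0^{L-1})$ and may be dropped, giving $H(\varphi_0^{L-1}\mid\phi_0^{L-1},\lambda)=H(\varphi_0^{L-1}\mid\tilde\phi_0^{L-1})$. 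Since $\{(\varphi_k,\upsilon_k)\}_{k}$ are i.i.d. (Section II assumes i.i.d. $\upsilon_k$, and the key symbols are uniform i.i.d.), the residual channel is memoryless and the posterior factorizes, so $H(\varphi_0^{L-1}\mid\tilde\phi_0^{L-1})=\sum_{k=0}^{L-1}H(\varphi_k\mid\tilde\phi_k)=L\cdot H(\varphi\mid\phi)$ for the single-use channel. I would then evaluate the per-symbol term by Bayes' rule: with $\varphi$ uniform over $\Omega$ the posterior is $p(\varphi\mid\phi)=f_\upsilon(\phi-\varphi)/\sum_{\bar\varphi\in\Omega}f_\upsilon(\phi-\bar\varphi)$, whence $H(\varphi\mid\phi)=E\!\left[\log_2\frac{\sum_{\bar\varphi\in\Omega}f_\upsilon(\phi-\bar\varphi)}{f_\upsilon(\phi-\varphi)}\right]$; substituting $\phi-\varphi=\upsilon$ and $\phi-\bar\varphi=\varphi+\upsilon-\bar\varphi$ under the joint law reproduces exactly the bracketed expectation in (\ref{eq:lowBound}). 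Chaining these relations yields the stated bound.

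The hard part will be justifying that revealing $\lambda$ is genuinely ``free,'' i.e. that conditioning on $(\phi_0^{L-1},\lambda)$ collapses to the $\lambda$-free memoryless posterior rather than leaking additional structure; this rests on the two facts that given $\lambda$ the map $\phi_k\mapsto\phi_k-\lambda$ is a measurable bijection, and that $\lambda$ is independent of $(\varphi_0^{L-1},\upsilon_0^{L-1})$. It is worth noting that only the independence of $\lambda$ is used — its assumed uniformity over $(-\pi,\pi]$ is not needed for this lower bound — and that the inequality is tight exactly when the unknown rotation carries no usable information, so the bound cleanly isolates the protection contributed purely by the artificial noise $\upsilon$.
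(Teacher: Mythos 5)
Your proof is correct, and at its core it is the same argument as the paper's: reveal the unknown rotation $\lambda$ to Eve, reduce to the coherent memoryless channel $\phi=\varphi+\upsilon$, factorize over the $L$ i.i.d.\ uses, and evaluate the per-symbol posterior by Bayes' rule --- your per-symbol term $H(\varphi\mid\phi)$ is exactly $\log_2 M$ minus the paper's coherent mutual information (\ref{eqn:muinfc}), so the two bounds coincide. The difference lies in how the reduction to the coherent channel is licensed. The paper works in the mutual-information domain, citing the decomposition $I(\phi_0^{L-1};\varphi_0^{L-1})=I(\phi_0^{L-1};\varphi_0^{L-1}\mid\lambda)-\left[I(\lambda;\phi_0^{L-1}\mid\varphi_0^{L-1})-I(\lambda;\phi_0^{L-1})\right]$ of \cite{PelegCOM} and then asserting $I(\lambda;\phi_0^{L-1})=0$; you work in the entropy domain and need only that conditioning cannot increase entropy, together with $\lambda\perp(\varphi_0^{L-1},\upsilon_0^{L-1})$. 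Your route is in fact slightly more robust: the paper's intermediate claim $I(\lambda;\phi_0^{L-1})=0$ is questionable as an exact equality when $\beta>0$, since the mixture density $\frac{1}{M}\sum_{\bar{\varphi}\in\Omega}f_\upsilon(x-\bar{\varphi}-\lambda)$ is only $2\pi/M$-periodic in $\lambda$, so the observed phases do leak $\lambda \bmod 2\pi/M$; by contrast, the inequality you use, $I(\varphi_0^{L-1};\phi_0^{L-1})\le I(\varphi_0^{L-1};\phi_0^{L-1},\lambda)=I(\varphi_0^{L-1};\phi_0^{L-1}\mid\lambda)$, needs only $I(\varphi_0^{L-1};\lambda)=0$ and holds regardless (and it also rescues the paper's bound, since $I(\lambda;\phi_0^{L-1}\mid\varphi_0^{L-1})-I(\lambda;\phi_0^{L-1})=I(\lambda;\varphi_0^{L-1}\mid\phi_0^{L-1})\ge 0$). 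This vindicates your closing observation that independence, not uniformity, of $\lambda$ is what matters. What the paper's formulation buys in exchange is an explicit expression for the slack in the bound, namely the noncoherence penalty $I(\lambda;\phi_0^{L-1}\mid\varphi_0^{L-1})$, which your one-inequality argument leaves implicit.
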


\begin{proof}
It was shown in \cite{PelegCOM} that
\begin{eqnarray}
  I(\phi_0^{L-1}; \varphi_0^{L-1}) &=& I(\phi_0^{L-1}; \varphi_0^{L-1}|\lambda) - \left[I(\lambda;\phi_0^{L-1}|\varphi_0^{L-1}) - I(\lambda;\phi_0^{L-1}) \right] \nonumber \\
   &=& I(\phi_0^{L-1}; \varphi_0^{L-1}|\lambda) - I(\lambda;\phi_0^{L-1}|\varphi_0^{L-1}),
\end{eqnarray}
where  $I(\lambda;\phi_0^{L-1})=0$ as $\{\lambda+\varphi_k\}_{k=0}^{L-1}$ is independent of $\lambda$, and the first term $I(\phi_0^{L-1}; \varphi_0^{L-1}|\lambda)$ denotes the coherent mutual information. By assuming a coherent channel model of $\phi =\varphi + \upsilon$, it can be efficiently computed as
\begin{eqnarray}
\label{eqn:muinfc}
\frac{1}{L} I(\phi_0^{L-1}; \varphi_0^{L-1}|\lambda) &=&I(\phi; \varphi) \nonumber \\
                &=&   E_{\phi,\varphi}\log_2\frac{p(\phi|\varphi)}{p(\phi)} \nonumber \\
                &=& E_{\phi,\varphi}\log_2\frac{p(\phi|\varphi)}{\frac{1}{M}\sum_{\bar{\varphi}\in \Omega}p(\phi|\bar{\varphi})} \nonumber \\
                &=&  \log_2 M -  E_{\phi,\varphi}\left[\log_2\frac{\sum_{\bar{\varphi}\in \Omega}p(\phi|\bar{\varphi})}{p(\phi|\varphi)}\right] \nonumber \\
                &=& \log_2 M -  E_{\varphi,\upsilon}\left[\log_2\frac{\sum_{\bar{\varphi} \in \Omega}f_\upsilon(\varphi+\upsilon-\bar{\varphi})}{f_\upsilon(\upsilon)}\right].
\end{eqnarray}
By noting that  $I(\phi_0^{L-1}; \varphi_0^{L-1}) \le I(\phi_0^{L-1}; \varphi_0^{L-1}|\lambda)$, the conditional equivocation can be bounded as
\begin{eqnarray}
   H(\mathcal{K}_B | Z_0^{L-1}) &=& H(\mathcal{K}_B) - I(Z_0^{L-1}; \mathcal{K}_B) \nonumber \\
   &\ge&  L\cdot E_{\varphi,\upsilon}\left[\log_2\frac{\sum_{\bar{\varphi}\in \Omega}f_\upsilon(\varphi+\upsilon-\bar{\varphi})}{f_\upsilon(\upsilon)}\right],
\end{eqnarray}
which could be strictly positive for a properly chosen distribution $f_\upsilon(x)$.
\end{proof}

\section{Numerical Examples}
\subsection{An Application Model for Getting the Shared Keys}
\begin{figure}[htb]
   \centering
   \includegraphics[width=0.50\textwidth]{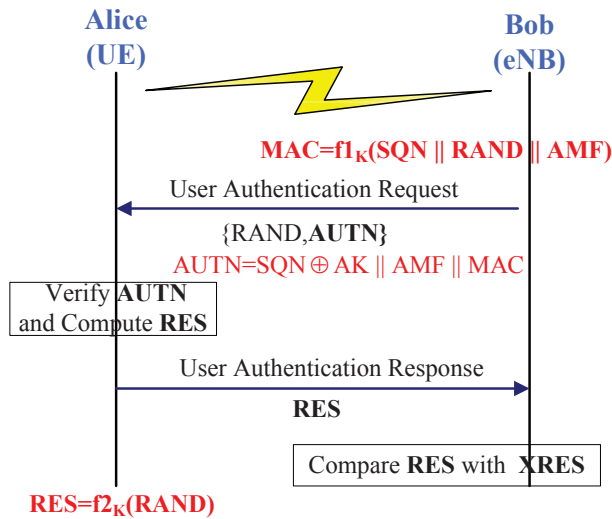}
   \caption{A Typical Challenge-Response Authentication Process.}
   \label{fig:sysModel}
\end{figure}

In developing ANA-PHY-PCRAS, we have assumed that Alice and Bob share two secret keys, namely, $\{\mathcal{K}_A,\mathcal{K}_B\}$. In practical wireless network scenarios, it is interesting to investigate how Alice and Bob can share secrets before authentication. This, indeed, depends on the underlying wireless network.

For 4G mobile networks, we consider a typical scenario where a user equipment (UE/Alice), wants to authenticate with an evolved Node-B (eNB/Bob). The possibility of sharing common secrets between Alice and Bob comes from the long-term secret key ($K$) stored on the Universal Subscriber Identity Module (USIM) and in the Authentication Center (AuC). The challenge-response authentication process can be depicted in Fig. \ref{fig:sysModel}. As shown, a pair of shared keys $\{\mathcal{K}_A,\mathcal{K}_B\}$ can be derived from the long-term key $K$, namely,
\begin{eqnarray}
\mathcal{K}_A &=& f2_K(\text{RAND}), \nonumber \\
\mathcal{K}_B &=& f1_K(\text{SQN}||\text{RAND}||\text{AMF}),
\end{eqnarray}
where RAND, SQN, AMF can be considered as random numbers, and $f1, f2$ are the message authentication function used to compute MAC and RES (XRES), respectively. Please refer to \cite{3GPPSA} for more details.

\subsection{Simulation Scenario}
Consider that the system operates at carrier frequency of $1.9$ GHz with a bandwidth of $W=20$ MHz, which is divided into $N=2048$ tones with a total symbol period of 108.8 $\mu$s, of which 6.4 $\mu$s constitutes the CP. Hence, $N_g=128$ and $N_f=N+N_g=2176$. $N=2048$ parallel subchannels are obtained using both IFFT and FFT. For ANA-PHY-PCRAS, $L=64+1$ subchannels with equal bandwidth interval ($\Delta \ell =32$) are selected with the minimum normalized correlation of 0.7136 among $L$ subchannels. For the modified ANA-PHY-PCRAS, $L'=16+1$ subchannels with equal bandwidth interval ($\Delta \ell =128$) are selected with the minimum normalized correlation of 0.2468. To allocate $L=L' J$ sub-channels, we repeatedly employ such $L'$ carriers at times $t_0, t_1, \cdots, t_{J-1}$, where $t_j=t_0 + j \cdot \delta T$ and $\delta T = 10 T_c =48$ms can be employed for example. With the use of large $\delta T$, the allocated subchannels at different time slots could be nearly uncorrelated. However, it should be noted that the use of large $\delta T$ could cause noticeable end-to-end delay. In the case of $\delta T = 10 T_c =48$ms and $J=4$, the end-to-end delay is at least $J \cdot \delta T =192$ ms, which is comparable to the time delay due to authentication in LTE \cite{3GPPSA}.

In simulations, we employ the channel model with exponentially decaying power-delay profile, where a total of 20 multipaths are assumed, the normalized delays $\dot{\tau}_i, i=0,1,\cdots, 19$ are assumed to be uniformly and independently distributed over the length of CP ($\dot{\tau}_i \in [0,N_g]$), and $\sigma_\tau = 0.5 \mu$s. This channel model is comparable to the urban channel defined in \cite{3GPPCh}, with 20 multipaths and maximum delay spread of $2.14 \mu$s. The path gains $\alpha_i (t)$'s are assumed to be complex-Gaussian distributed, which remain constant during one OFDM symbol but varying independently if the time interval between two OFDM symbols is larger than $\delta T$.
\begin{figure}[htb] 
   \centering
   \includegraphics[width=0.5\textwidth]{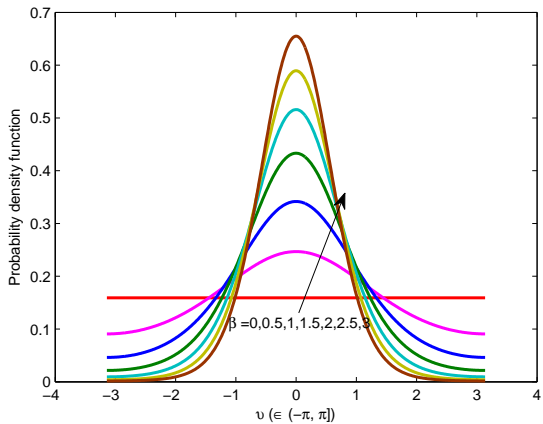} 
   \caption{Probability density functions of Tikhonov distributed artificial noise with different $\beta$'s.}
   \label{fig:TikPdf}
\end{figure}

\begin{figure}[htb] 
   \centering
   \includegraphics[width=0.5\textwidth]{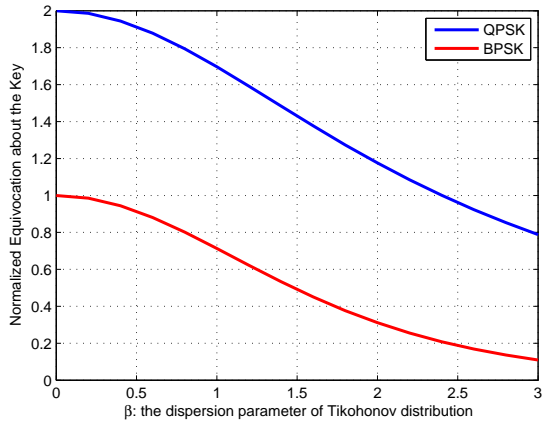} 
   \caption{Normalized equivocation about the key with ANA-PHY-PCRAS.}
   \label{fig:Eqv}
\end{figure}

For the design of physical layer authentication schemes, one should carefully balance the three performance metrics, namely, the successful authentication rate, the false acceptance rate and the (normalized) key equivocation $\frac{1}{L}H(\mathcal{K}_B | Z_0^{L-1})$ for any eavesdropper.  \textit{In most scenarios, the ideal Receiver Operating Characteristic (ROC) (successful authentication rate versus false acceptance rate) can be achieved without much difficulty in the working SNR region for the purpose of communications. Hence, the key equivocation, as a security metric, could be of the first importance for its use in practice}.

\subsection{Key Equivocation}
 We compute the key equivocation for ANA-PHY-PCRAS in the worst case scenario. As shown in (\ref{eq:lowBound}), it depends on the specified distribution of artificial noise. Fig. \ref{fig:TikPdf} shows the pdfs of the Tikhonov distributed artificial noise with different $\beta$'s. Then, we plot the lower bound (\ref{eq:lowBound}) on the (normalized) key equivocation in Fig. \ref{fig:Eqv} for different $\beta$'s, with both BPSK and QPSK constellations.
As shown, the key equivocation achieves the maximum at $\beta=0$, in which case the uniformly-distributed artificial noise over $(-\pi,\pi]$ is employed. The key equivocation decreases when $\beta$ increases. When no artificial noise is introduced, the key equivocation is simply reduced to zero for this worst case scenario, which means that there is no guarantee of information-theoretic security for PHY-PCRAS \cite{WuCL2014}. Clearly, the use of higher-order modulation scheme can strengthen the system security as the key equivocation increases.

\textit{We comment here that there is simply no guarantee of information-theoretic security for various reported physical-layer authentication schemes \cite{DanPHY,YuIFS} if Eve is very close to Bob and hence she can get a noise-free version of the transmitted signal by Bob, and in the same time the channel between Alice and Bob is unfortunately static over the period of authentication.}

\subsection{ROC Performance}
Through extensive Monte-Carlo simulations, we investigate the pdfs of $\zeta$ under two hypothesis $H_i, i=0,1$, which can be well employed to evaluate both successful authentication and false acceptance rates.  The proper choice of the threshold $\iota$ can also be determined from the pdfs of $\zeta$.

\subsubsection{PHY-PCRAS, ANA-PHY-PCRAS and Modified ANA-PHY-PCRAS}
\begin{figure}[htb] 
   \centering
   \includegraphics[width=0.5\textwidth]{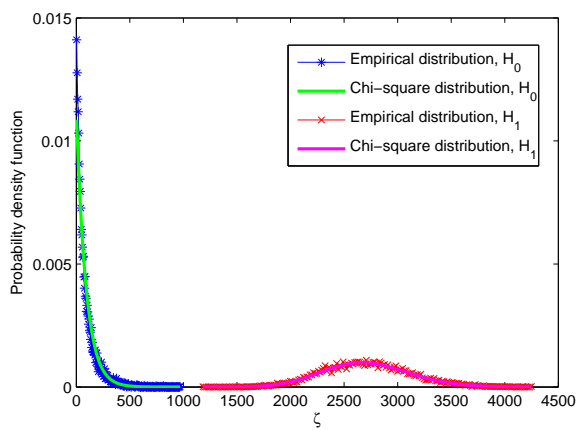} 
   \caption{Probability density functions of $\zeta|H_1$ and $\zeta|H_0$ at SNR = 5 dB for PHY-PCRAS.}
   \label{fig:pdf1}
\end{figure}

\begin{figure}[htb] 
   \centering
   \includegraphics[width=0.5\textwidth]{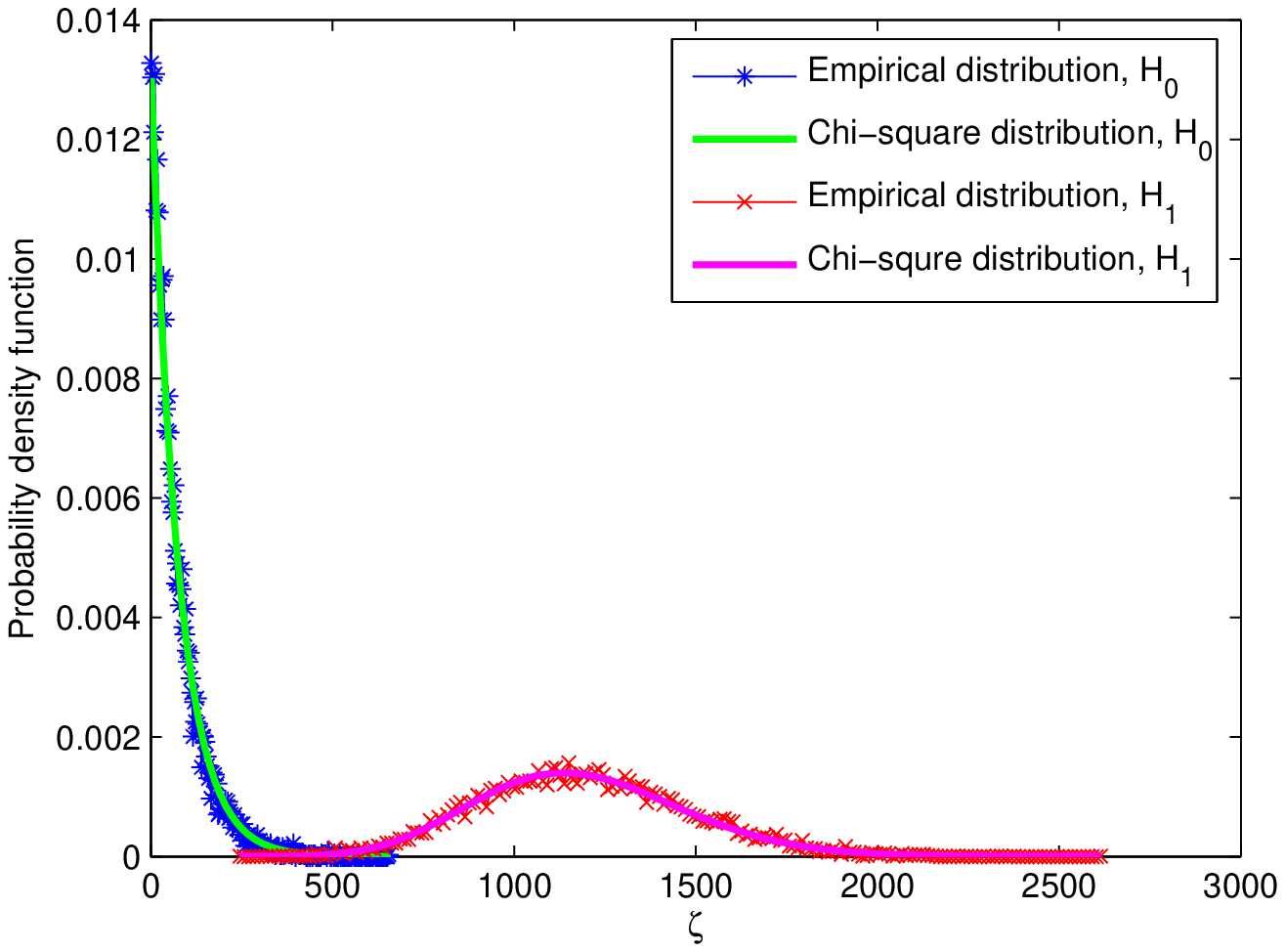} 
   \caption{Probability density functions of $\zeta|H_1$ and $\zeta|H_0$ at SNR = 10 dB and $\beta=1.5$ for ANA-PHY-PCRAS.}
   \label{fig:pdf2}
\end{figure}

\begin{figure}[htb] 
   \centering
   \includegraphics[width=0.5\textwidth]{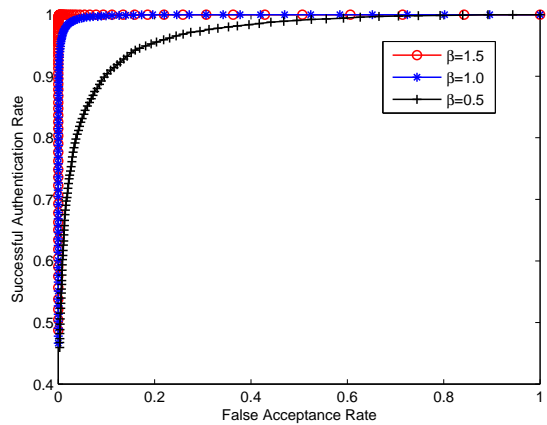} 
   \caption{Successful authentication rate versus false acceptance rate at SNR=10 dB for ANA-PHY-PCRAS with different $\beta$'s.}
   \label{fig:rocbeta}
\end{figure}

\begin{figure}[htb] 
   \centering
   \includegraphics[width=0.5\textwidth]{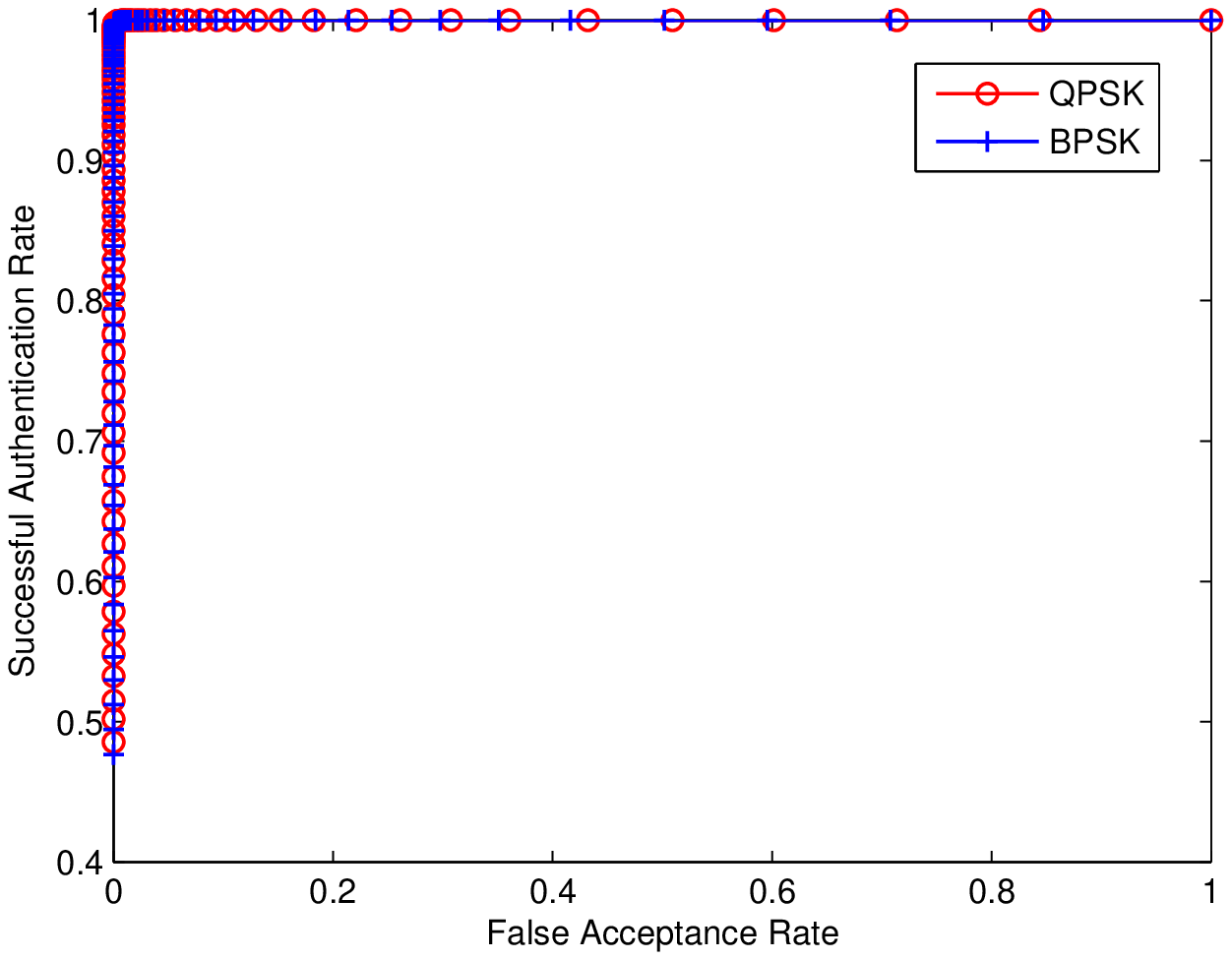} 
   \caption{Comparison of ROC curves with ANA-PHY-PCRAS for both BPSK and QPSK constellations ($\beta=1.5$).}
   \label{fig:rocBQ}
\end{figure}

With $L=64+1$ subchannels selected among $N=2048$ OFDM subchannels ($\Delta \ell=32$) for a single OFDM symbol, Fig. \ref{fig:pdf1} shows empirical pdfs of $\zeta|H_1$ and $\zeta|H_0$ at SNR=5 dB for PHY-PCRAS, while Fig. \ref{fig:pdf2} shows empirical pdfs of $\zeta|H_1$ and $\zeta|H_0$ at SNR=10 dB for ANA-PHY-PCRAS with $\beta=1.5$. In both figures, BPSK constellation is assumed. As claimed in Section-III, $\zeta|H_1$ and $\zeta|H_0$ are both Chi-square distributed. Hence, Chi-square distributions are also given in both figures, where $|\bar{\eta}_i|, \sigma^2_{H_i}, i=0,1$ are directly estimated through Monte-Carlo simulations \cite{RiceEst}. As shown, the theoretical Chi-square distributions are coincided well with the empirical distributions even though $L$ subchannels are correlated.  Since the pdf of $\zeta|H_1$ is far apart from that of $\zeta|H_0$ even at the SNR of 5 dB in Fig. \ref{fig:pdf1}, almost ideal ROC curve can be observed. With the introduction of artificial noise, the ROC performance of ANA-PHY-PCRAS is clearly inferior to that of PHY-PCRAS as indicated by Fig. \ref{fig:pdf2}.

Next, we investigate the effect of $\beta$ on the ROC curves for ANA-PHY-PCRA, which is depicted in Fig. \ref{fig:rocbeta} for different $\beta$'s. From both Fig. \ref{fig:rocbeta} and Fig. \ref{fig:Eqv}, we conclude that there is a fundamental tradeoff between the ROC performance and security, which is controlled by the amount of artificial noise ($\beta$).

In Fig. \ref{fig:Eqv}, we have shown that the use of QPSK constellation is significantly superior to the use of BPSK constellation for the security of ANA-PHY-PCRAS. Here, we show their ROC curves in Fig. \ref{fig:rocBQ} for ANA-PHY-PCRAS with both BPSK and QPSK constellations, where $\beta=1.5$ is used. Noting that the use of QPSK constellation requires the size of key doubled compared to the use of BPSK constellation. As shown, the same ROC curves are observed for both BPSK and QPSK. Hence, the use of higher order constellations can significantly improve the security of ANA-PHY-PCRAS, which is very helpful in practical scenarioes whenever the number of allocated subchannels is not enough compared to the size of the shared key.

\begin{figure}[htb] 
   \centering
   \includegraphics[width=0.5\textwidth]{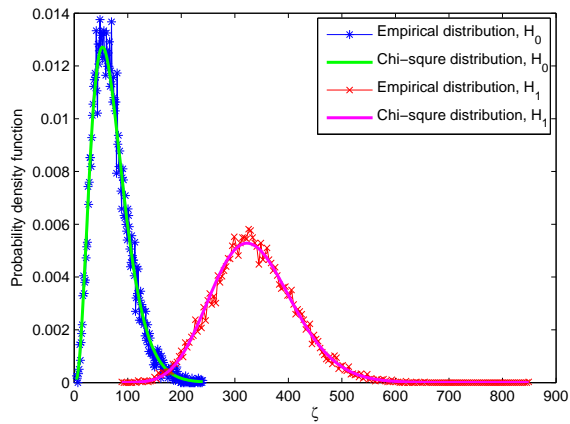} 
   \caption{Probability density functions of $\zeta|H_1$ and $\zeta|H_0$ at SNR=10 dB with time-separated subchannel allocation ($J=4$) and $\beta=1.5$.}
   \label{fig:pdf3}
\end{figure}

We also consider the modified ANA-PHY-PCRAS, where $L=L'J$ subchannels allocated for four ($J=4$) far-separated OFDM symbols with $L'=16+1$ subchannels allocated for each OFDM symbol. Fig. \ref{fig:pdf3} shows its empirical pdfs of $\zeta|H_1$ and $\zeta|H_0$ at SNR=10 dB and $\beta=1.5$. Although the modified ANA-PHY-PCRAS can be better protected by the randomness of the physical channel (due to well-separated subchannels in both time and frequency), it, however, is slightly inferior to ANA-PHY-PCRAS in the ROC performance as indicated in Fig. \ref{fig:pdf2} and Fig. \ref{fig:pdf3}, due to noncoherent combining loss.

\subsubsection{The effect of practical imperfections}
We consider practical imperfections in both the challenge and response stages. Imperfects at the receiver of Bob in the challenge stage are assume to be independent from the receiver of Alice in the response stage.

In simulations, both the effects of carrier frequency offset and sampling offset are considered, while the sampling frequency offset is not considered, as its effect can be well included in the equivalent channel model as shown in (\ref{eq:imodel}).  The residual carrier frequency offset $\vartheta=\Delta f T_u$ is assumed to be uniformly distributed in $[-\vartheta_{\max},\vartheta_{\max}]$. The sampling offset $n_\varepsilon$ is also uniformly distributed in $[-n_\varepsilon^{\max},n_\varepsilon^{\max}]$. By referring to (\ref{eq:rmod}), the verification should be searched over the range of $\varpi$, due to the sampling offsets introduced by the receivers at the sides of both Bob and Alice.
Clearly,
\begin{equation}
\varpi \in [-\varpi_{\max}, \varpi_{\max}], \varpi_{\max}=2\pi \times \frac{2n_\varepsilon^{\max}\Delta \ell}{N}
\end{equation}
With a step size of $2\pi \frac{2\varpi_{\max}}{N_w}$ for search of $\varpi$, there are $N_w$ candidate frequencies to be tested for maximizing $\zeta$ (\ref{eq:rmod}).

In Fig. \ref{fig:pdf4}, the modified ANA-PHY-PCRAS is considered for $J=4, n_\varepsilon^{\max} = 10, \Delta \ell=128$, $\vartheta_{\max}=0.1$ and $\beta=1.5$. Clearly, $\varpi\in 2\pi\times [-0.625,0.625]$. One can show that the SNR loss \cite{OFDMRev-I} due to both carrier frequency offset and sampling offset is negligible when the working SNR is $10$ dB, which was verified by extensive simulations.

By comparing Fig. \ref{fig:pdf4} with Fig. \ref{fig:pdf3}, there is actually minor difference between the scenarios of zero- and non-zero sampling/carrier frequency offsets for the empirical pdfs when $N_w$ is set to 200. Even with $N_w=40$, it still works with slightly degraded ROC performance. Therefore, the number of candidate frequencies to be tested can be very small for authentication, and the increase in complexity due to the search of frequency can be well controlled.
\begin{figure}[htb] 
   \centering
   \includegraphics[width=0.5\textwidth]{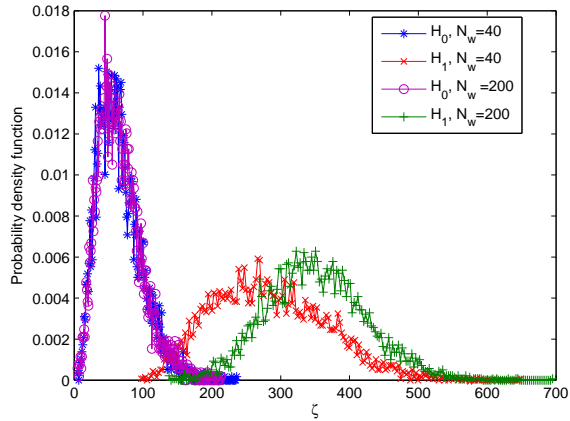} 
   \caption{Empirical probability density functions of $\zeta|H_1$ and $\zeta|H_0$ at SNR=10dB ($J=4$) and $\beta=1.5$.}
   \label{fig:pdf4}
\end{figure}

\subsection{Comparison with PHY-CRAM}
\begin{figure}[htb] 
   \centering
   \includegraphics[width=0.5\textwidth]{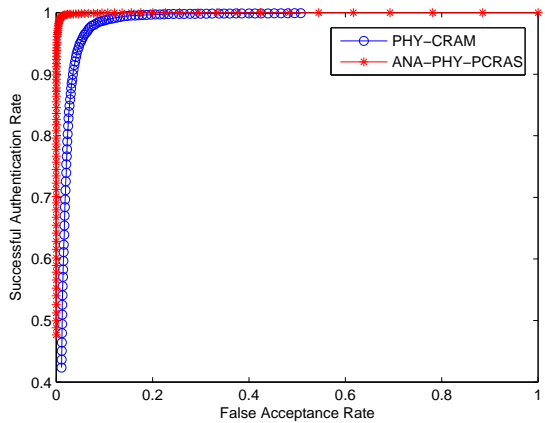} 
   \caption{Comparison of the proposed ANA-PHY-PCRAS and PHY-CRAM for ROC curves at SNR=5 dB and $L=64$.}
   \label{fig:comp}
\end{figure}
As a mutual physical challenge-response authentication scheme, the PHY-CRAM proposed in \cite{DanPHY} was shown to be simple, low complexity,
robust, and flexible. Hence, it is interesting to compare ANA-PHY-PCRAS with PHY-CRAM.

Fig. \ref{fig:comp} shows the comparison result in the ROC performance at SNR=5 dB, where $\beta=1.5$ is used for ANA-PHY-PCRAS. Therefore, a normalized key equivocation of $\frac{1}{L} H(\mathcal{K}_B | Z_0^{L-1}) \ge 0.491$ can be achieved in the worst case scenario. This, however, is not true for PHY-CRAM. Even with the introduction of artificial noise, ANA-PHY-PCRAS is still better than PHY-CRAM in the ROC performance as shown in Fig. \ref{fig:comp}. Indeed, PHY-CRAM employs amplitude modulation, which is often worse than phase modulation in performance. For implementation, high peak fluctuations may occur with PHY-CRAM, due to the employment of amplitude modulation. Hence, it requires to suppress the high peak in practice with additional complexity. ANA-PHY-PCRAS, however, is more sensitive to the frequency offset compared to PHY-CRAM.

\section{Conclusion}
In this paper, we proposed a novel ANA-PHY-PCRAS for practical OFDM transmission, where the Tikhonov-distributed artificial noise is introduced to interfere with the phase-modulated key for resisting potential key-recovery attacks. Thanks to the introduced artificial noise, the proposed ANA-PHY-PCRAS was proved to be secure even in the worst case scenario, where a static channel between Alice and Bob is assumed, and Eve can even get a noise-free version of the transmitted signal by either Alice or Bob.

Various practical issues are addressed for ANA-PHY-PCRAS with OFDM transmission, including correlation among subchannels, imperfect carrier and timing recoveries. The effect of sampling offset was shown to be significant for the practical implementation of ANA-PHY-PCRAS, and a search procedure in the plane of frequency should be seriously considered for verification even with very small sampling offsets. We also proposed a modified ANA-PHY-PCRAS for time-separated subchannels, which shows its robustness in verification whenever the local oscillator at the receiver may change over time.

Compared to the traditional challenge-response authentication scheme implemented at the upper layer, we conclude that ANA-PHY-PCRAS (or its modified version) can be further protected by the uncertainty from both the wireless channel and introduced artificial noise, which is of information-theoretic nature and could not be broken even with unlimited computational power.


\end{document}